\begin{document}

\newtheorem{thm}{Theorem}
\newtheorem{prop}{Proposition}
\newtheorem{lem}{Lemma}
\newtheorem{method}{Coding method}
\newtheorem{remark}{Remark}[section]

\newcommand{\activenum}{K}
\newcommand{\groupIDset}{\mathcal{K}}
\newcommand{\usernum}{N}
\newcommand{\userIDset}{\mathcal{N}}
\newcommand{\timeset}{\mathcal{T}}
\newcommand{\activevector}{\mathbf{s}}
\newcommand{\activeset}{\mathcal{G}_{\activevector}}
\newcommand{\codebook}{\mathbf{X}}
\newcommand{\channeloutput}{\mathbf{y}}
\newcommand{\usercodeword}{\mathbf{x}}
\newcommand{\partition}{\mathbf{z}}
\newcommand{\sufficientbound}{C_1}
\newcommand{\necessarybound}{C_2}
\newcommand{\partitionset}{\mathbb{Z}}
\newcommand{\activesetcollection}{\mathbb{S}}
\newcommand{\hyper}{\mathcal{H}}
\newcommand{\hyperone}{\mathcal{H}_T^{\rq{}}}
\newcommand{\hypertwo}{\mathcal{H}^{*}_T}
\newcommand{\realactiveset}{\mathcal{G}_{\activevector_0}}

%
\title{Asymptotic Error Free Partitioning over Noisy Boolean Multiaccess Channels }


%
\author{\IEEEauthorblockN{Shuhang Wu,
Shuangqing Wei,
Yue Wang, 
Ramachandran Vaidyanathan and
Jian Yuan}}
\maketitle
\footnotetext[1]{S. Wu, Y. Wang and J. Yuan are with Department of Electronic Engineering, Tsinghua University, Beijing, P. R. China, 100084. (E-mail: {wsh05}@mails.tsinghua.edu.cn; {wangyue, jyuan}@mail.tsinghua.edu.cn). S. Wei and R. Vaidyanathan are with the School of Electrical Engineering and Computer Science, Louisiana State University, Baton Rouge, LA 70803, USA (Email: {swei, vaidy}@lsu.edu). 

This paper was submitted in June 2014 to IEEE Transactions on Information Theory, and is under review now.}


\begin{abstract}
	In this paper, we consider the problem of partitioning active users in a manner that facilitates multi-access without collision. The setting is of a noisy, synchronous, Boolean, multi-access channel where $K$ active users (out of a total of $N$ users) seek to access. A solution to the partition problem places each of the $N$ users in one of $K$ groups (or blocks) such that no two active nodes are in the same block. We consider a simple, but non-trivial and illustrative case of $K=2$ active users and study the number of steps $T$ used to solve the partition problem. By random coding and a suboptimal decoding scheme, we show that for any $T\geq (C_1 +\xi_1)\log N$, where $C_1$ and $\xi_1$ are positive constants (independent of $N$), and $\xi_1$ can be arbitrary small, the partition problem can be solved with error probability $P_e^{(N)} \to 0$, for large $N$. Under the same scheme, we also bound $T$ from the other direction, establishing that, for any $T \leq (C_2 - \xi_2) \log N$, the error probability $P_e^{(N)} \to 1$ for large $N$; again $C_2$ and $\xi_2$ are constants and $\xi_2$ can be arbitrarily small. These bounds on the number of steps are lower than the tight achievable lower-bound in terms of $T \geq (C_g +\xi)\log N $ for group testing (in which all active users are identified, rather than just partitioned). Thus, partitioning may prove to be a more efficient approach for multi-access than group testing.
\end{abstract}

\begin{IEEEkeywords}
partition information, conflict resolution, strong coloring, noisy Boolean channel 
\end{IEEEkeywords}

\section{Introduction}
	For successful payload transmission in networks, resources are needed to coordinate among users. A simple example is that of a MAC protocol, in which active users coordinate to avoid collision in channel access. 

{\bf The Partition Problem:}
	One simple way to achieve this coordination is through the {\it partition problem} defined below. For integer $N \geq 1$, let $\userIDset = \{1, \ldots, N\}$ and for integer $2 \leq K \leq N$, let $\activeset = \{i_1, \ldots, i_K\} \subseteq \userIDset$. A solution to the partition problem is a $K$-partition\footnote{A $K$-partition $\Pi = \{\mathcal{B}_1, \ldots, \mathcal{B}_K\}$ of $\userIDset$ is a set of $K$ non-empty subsets of $\userIDset$ that satisfies the following conditions: (a) for all $1\leq i<j\leq K$, $\mathcal{B}_i \cap \mathcal{B}_j = \emptyset$ and (b) $\bigcup_{i=1}^K \mathcal{B}_i = \userIDset$.} $\Pi = \{\mathcal{B}_1, \ldots, \mathcal{B}_K\}$ of $\userIDset$ such that for any $1\leq k \leq K$, we have $|\mathcal{B}_i \cap \activeset|=1$. That is, every group (or block) $\mathcal{B}_i$ of $\Pi$ contains exactly one element of $\activeset$.

	One could represent the $K$-partition $\Pi$ as a function $z:\userIDset \to \groupIDset$, where $\groupIDset = \{1, \ldots, K\}$, such that $z(i) = j$ iff $i \in \mathcal{B}_j$. In this paper, we will represent this function by a vector $\mathbf{z} = [z_i]$ where $z_i=j \in \groupIDset$ iff $i \in \mathcal{B}_j$. Thus, a valid partition $\mathbf{z}$ for an active set $\activeset$ in the partition problem satisfies $\forall i, j \in \activeset$, $i \neq j \Rightarrow z_i \neq z_j$.

	Now consider a set of $N$ users from $\userIDset$ sharing a Boolean multi-access channel. Assume $K$ of these users from set $\activeset \subseteq \userIDset$ are active, seeking to access the channel. If the partition problem is solved and each active user $i \in \activeset$ knows its group number $z_i$, then the $K$ active users can successively access the channel exclusively in $K$ data rounds: active user $i$ accesses the channel in data round $z_i$. This operation is fundamental to MAC protocols. 

     Observe that solving the partition problem does not require an active user to know the identities of other active users (as in the case in group testing). Thus, the partition problem, while sufficient for multi-access, holds the promise of a more efficient solution than group testing. In this paper, we demonstrate this potential of the partition problem.

{\bf The Channel:}
	We consider a slotted, noisy, Boolean, multi-access channel shared by $N$ users from $\userIDset$ of which a set $\activeset$ of $K$ users are active. For each round $t \geq 1$ and user $i \in \userIDset$, let $x_{i,t} \in \{0,1\}$ be a flag such that user $i$ writes to the channel in round $t$ iff $i \in \activeset$ and $x_{i,t} = 1$. The flag $x_{i,t}$ is also used to construct a transmission matrix as explained later. In the noiseless case, the channel provides (during round $t$) a feedback $y_{0,t} = \bigvee_{i \in \activeset} x_{i,t}$. The effect of observation noise is to alter the channel feedback $y_{0,t}$ to $y_t$ as follows. If $y_{0,t}=0$ then $y_t = 1$ with probability $q_{10}$ and $y_t=0$ with probability $1-q_{10}$. If $y_{0,t}=1$ then $y_t=0$ with probability $q_{01}$ and $y_t=1$ with probability $1-q_{01}$. Fig. \ref{Pfig1} illustrates this model. The channel output for $T$ rounds of transmissions is denoted by $T$-elements vector $\mathbf{y} = [y_1,\ldots, y_T]^\top$.
\begin{figure}
\centering
\includegraphics[width=1.8in]{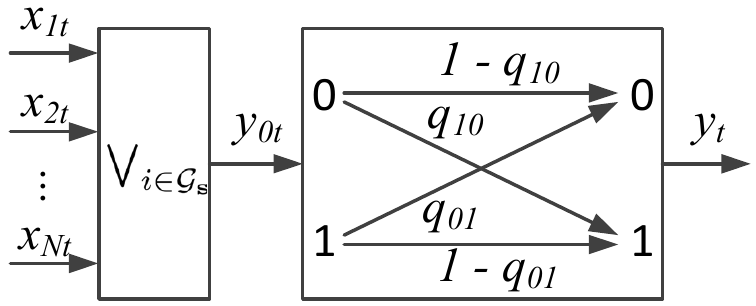}
\caption{Noisy multi-access Boolean channel.} \label{Pfig1}
\end{figure} 

	{\bf Our Approach and Main Results:}
    For any $T\geq 1$, $T$ rounds of potential transmissions by the $N$ users can be represented by an $N \times T$ transmission matrix $\mathbf{X} = [x_{i,t}]$; we say potential transmission to reiterate that it is only the active users $i$ with $x_{i,t}=1$ that transmit in round $t$. Recall that $\mathbf{z} = [z_i]$ where $1\leq i \leq N$ and $z_i \in \groupIDset$ denotes a partition of the $\userIDset$ and that $\mathbf{y} = [y_t]$ (where $1\leq t\leq T$) denotes the channel feedback over $T$ rounds. Let $\mathbb{Z}_{K;N} \triangleq \left\{\partition \in (\groupIDset)^N:\forall 1\leq k \leq K, \exists z_i = k\right\}$ denote the set of all possible $K$-partitions, consider a function $g: \{0,1\}^N \times \{0,1\}^T \times \{0,1\}^T \longrightarrow \mathbb{Z}_{K;N}$. For any given $N \times T$ transmission matrix $\mathbf{X}$ and a sequence of $T$ channel outputs $\mathbf{y}$, the function $g(\mathbf{X},\mathbf{y})$ produces a partition $\mathbf{z}$ of $\userIDset$.

	Given $T \geq 1$, the objective is to design an $N \times T$ transmission matrix $\mathbf{X}$ and a non-adaptive decoding function $g$ {\it a priori} such that for nearly every given $\activeset \subseteq \userIDset$, when written over $T$ rounds, the channel produces outputs $\mathbf{y}$ such that $g(\mathbf{X}, \mathbf{y}) = \mathbf{z}$ is a valid partition of the active users. We are seeking the infimum of $T$ over all possible $\mathbf{X}$, $g$ such that for any given $\activeset \subseteq \userIDset$, a valid partition $\mathbf{z}$ is produced, when $N \to \infty$. 

		In \cite{wsh2}, \cite{wu2014partition}, we have studied the noiseless channel case, using an i.i.d. Bernoulli random coding method to generate entries of $\mathbf{X}$. We modeled information gained towards partition construction as a sequence $\hyper_0, \hyper_1, \ldots, \hyper_T$ of hypergraphs. Here $\hyper_0$ is an $N$-nodes hypergraph with all possible hyperedges of rank $K$. Each channel output $y_t$ induces the removal of hyperedge(s) from $\hyper_{t-1}$ to construct $\hyper_t$. When the hyperedge corresponding to the active set $\activeset$ is in the resulting hypergraph $\hyper_T$, and every subgraph of $\hyper_T$ containing this hyperedge is strongly $K$-colorable (distinct colors within each hyperedge), then the active users have been partitioned (these colors being their group numbers). A point to note is that hypergraph $\hyper_t$ can be obtained from $\hyper_{t-1}$ using only the round-$t$ channel feedback $y_t$.

	For the noisy channel considered in this paper, we will still model partition problem from a perspective of a strong coloring of hypergraphs. However, $\hyper_t$, the hypergraph for round $t$, is constructed using the entire history of channel feedback $y_1, \ldots, y_t$. The analysis techniques used are also completely different from our earlier work\cite{wsh2}, \cite{wu2014partition}. We propose a sub-optimal strong typical set decoding method, and adopt random coding as well as a large deviation technique for an induced Markov chain. A more generalized structure is revealed than the extended Fibonacci numbers found in noiseless case, which could be potentially extended to solve more general cases with $K > 2$ active users. It is shown there is a gap between $\necessarybound$ and $\sufficientbound$ (the constants associated with the upper\&lower bounds on time $T$) under this scheme, which implies that there is room for further improvement.

	{\bf Prior Work:}
	The partitioning problem has a close relationship to conflict resolution\cite{1057022} and group testing\cite{ding2000combinatorial} (or compressed sensing\cite{malyutov2013search}) problems. Conflict resolution involves directly scheduling a transmission matrix $\codebook_{cr}$ (subscript ${cr}$ is for conflict resolution) for at least one slot so that each active user has its exclusive access to the channel. Note that the resulting transmission order of active nodes is not known to the users, only success of the transmission is ensured. Group testing also schedules a transmission matrix $\codebook_g$ such that set $\activeset$ of active users is exactly determined from the feedback $\channeloutput$; i.e., states of all users are identified. Our partition reservation system and group testing can both be used as a reservation step that assigns distinct transmitting orders to active users. Subsequent to this reservation stage, just $K$ slots of packets size are needed for active users to transmit their packets without conflict; in the reservation stage, however, the size of a time slot can be much smaller (just a bit). In contrast, the slots are of packet size during the entire process in the conflict resolution approach. We also establish that the partition reservation system needs less time than group testing, as it solves a weaker problem.

	To the best of our knowledge, Hajak first realized the nature of conflict resolution is to partition active users to different groups\cite{1056551,1056332}, and derived an achievable bound as partitioning information, without considering channel and transmission effects. The converse problem, which is close to a zero-error list-codes or perfect hashing problem is still an open problem; it was discussed by Hajak, K{\"o}rner, Arikan, {\it et al}, \cite{hajek1987conjectured,korner1988separating,2639, arikan1994upper}, and K{\"o}rner and Orlitsky in \cite[Chap. V]{720537}. These previous works on partition information are from the source coding perspective; i.e., representation of users' states using partition information. In contrast, we focus on construction of a partition relationship among active users by their explicit transmission over a collision and noisy Boolean multi-access channel. This problem has not been addressed previously. There are various approaches on non-adaptive conflict resolution and group testing for a Boolean multi-access channel; these methods are either \emph{combinatoric} or \emph{probabilistic}. These include overviews \cite{ding2000combinatorial}, \cite{gyorfilectures}, \cite{sandor2008}, and specific approaches including superimposed codes  \cite{kautz1964nonrandom, dyachkov1983survey, DeBonis2003223, chen2007exploring}, selective families \cite{Kowalski:2005:SPR:1073814.1073843}, broadcasting problem \cite{Clementi:2001:SFS:365411.365756}, and other methods \cite{capetanakis1979generalized, 1057020, sebHo1985two}. It should be noted that recently \cite{6157065} the group testing problem has been reformulated under an information theoretical framework to study the limits of restoration of IDs of active nodes over noisy Boolean multiple access channel. Noisy group testing is also discussed in Chan, {\it et al}. \cite{6120391}  and Malyutov \cite{5555301}.

	 Our work also has a significant impact on the understanding the limits of partitionability of interacting users in distributed systems. It has varies of applications. First, as stated before, it can be applied in the reservation stage of conflict resolution. Second, since the partition is obtained by all users, more complicated coordination is available for users to achieve better efficiency and more functions of the system. An example is that beside conflict resolution in time domain, the active users can avoid conflict in time-frequency domain, if they are assigned different orthogonal time-frequency codes according to the partition. Moreover, it could find use in other applications, including distributed multi-channel assignments, clustering, leader election, broadcasting, and resource allocation. \cite{Kowalski:2005:SPR:1073814.1073843,xavier1998introduction,vaidyanathan2003dynamic,hromkovic2005dissemination}.

	The rest of this paper is organized as follows. First, the problem is formulated in Section \ref{sec2}. A hypergraph strong coloring approach to decoding is presented in Section \ref{sec3}.  For the case with $K=2$ active users, the sufficient condition of time $T$ needed to obtain the desired partition is derived in Section \ref{sec4}, while the necessary condition under the same random coding and sub-optimal decoding framework is derived in Section \ref{sec5}. We compare our results with that of group testing in Section \ref{sec6}. Section \ref{sec7} concludes the results.

\section{System model and random coding}\label{sec2}
\subsection{Formulation}
	We further introduce some notation. In this paper, lower-case (resp., upper-case) boldface letters are used for column vectors (resp., matrices). For example, $\mathbf{w}=[w_i]$ denotes a vector with $w_i$ as the $i$-th element, while $\mathbf{W}=[w_{i,j}]$ denotes a matrix with element $w_{i,j}$ in row $i$ and column $j$. We use natural logarithms to base $e$. Symbols $\wedge$ and $\vee$ are used to represent AND, OR between events, for example, $B\wedge C$ denotes an event in which both $B$ and $C$ occur. These symbols are also used to represent logical AND and OR operations between Boolean operations, for example, $1\wedge 0 = 0$, $1 \vee 0 = 1$. The probability of a random variable $A$ having value $\tilde{A}$ is denoted by $p_A(\tilde{A}) \triangleq \text{Pr}(A=\tilde{A})$.  Similarly, $p_{A|B}(\tilde{A}|\tilde{B})\triangleq \text{Pr}(A=\tilde{A}|B=\tilde{B})$. Where there is no danger of ambiguity, we will drop the subscripts and simply write $p(A)$ or $p(A|B)$ to denote the above quantities.

	We assume that $K$ is known. We use a Boolean vector $\activevector=[s_1, \ldots, s_N]^{\top}$ to represent the active states of users, i.e., $s_i = 1$ iff $i \in \activeset$, recall that $\activeset = \{i_1, \ldots, i_K\}$ is the set of active users. Denote by $\activesetcollection_{K;N} \triangleq \{\activevector \in \{0,1\}^N:\sum s_i = K\}$ the set of all possible vectors $\activevector$ for $K$ active users. Active users use $T$ time slots to transmit according to $N \times T $ transmission matrix $\codebook$ and observe the feedback $\channeloutput$. Using these, the nodes obtain the $K$-partition $\partition=g(\codebook, \channeloutput)$. Assume that there is stationary, memoryless, observation noise in the channel under which the relation between $y_t$ (noisy channel feedback) and $y_{0t} =\bigvee_{i\in \activeset} x_{i,t}$ (noise-free channel feedback) is captured by the conditional probability $p_{y_t|y_{0t}}(y_t|y_{0t})$, as shown in Fig. \ref{Pfig1}. Recall that only active users $i$ with $x_{i,t}=1$ writes to the channel in round $t$. Therefore, the sequence of values collectively written to the channel (without the effect of noise) is $\channeloutput_0 = \codebook^\top\otimes \activevector \triangleq [\bigvee_{i} \left(x_{i,t} \wedge s_i\right)]$.  There are two dimensions in this problem, the user dimension $N$ and time dimension $T$. 

The partition problem can be illustrated by an example in Fig.~\ref{Pfig12}: user 1 and 2 are active, after transmission according to $\codebook$, the feedback $\channeloutput$ is observed instead of $\channeloutput_0$ due to the presence of noise; a common partition $[1~2~1~2]^{\top}$ is obtained by some decoding function $g$; it is a correct partition since active users are assigned to different groups. 

\begin{figure}
\centering
\includegraphics[width=3.2in]{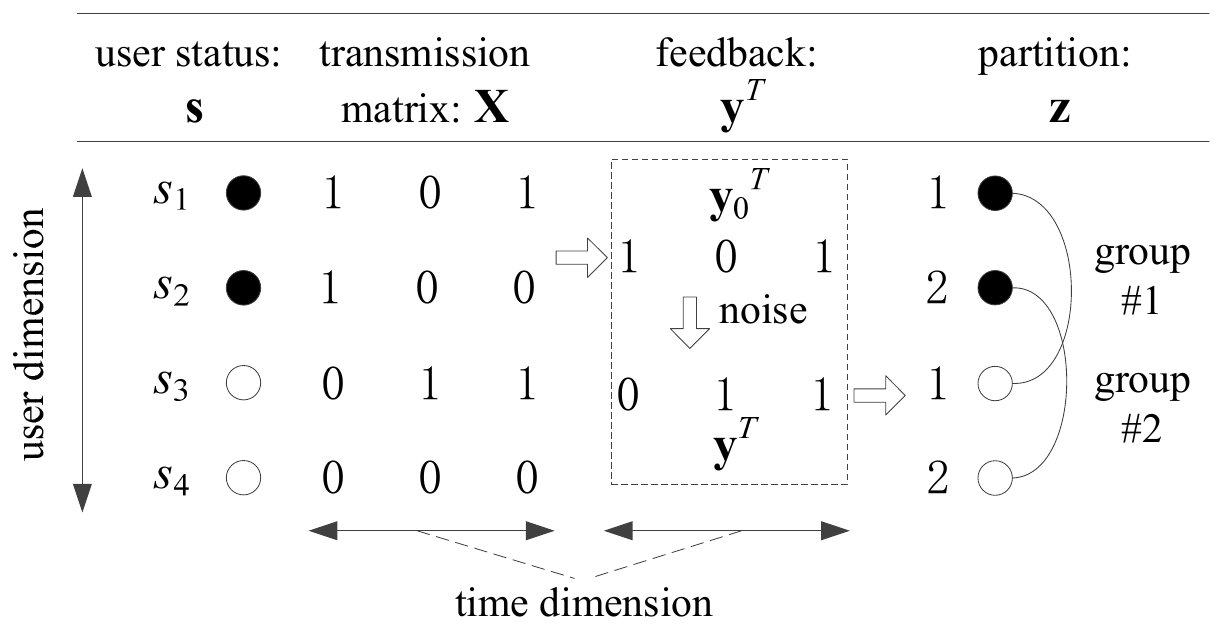}
\caption{Example of the formulation. ($N=4$, $K=2$, $\activeset=\{1,2\}$ means the \engordnumber{1} and \engordnumber{2} users are active, the number of time slots is $T=3$.)} \label{Pfig12}
\end{figure} 

	The partition problem can be treated as a coding problem in a multi-access channel from the information theoretic view as shown in Fig. \ref{pfig3}. Consider $N$ users with active states $\activevector$ as $N$ inputs to the system. The $i$-th row of $\codebook$, denoted by $\mathbf{x}_i^{\top}$($\mathbf{x}_i$ is a $T$ column vector), can be viewed as a codeword of user $i$, so that it will send $s_i \mathbf{x}_i^{\top}$ on the channel, and the feedback $\channeloutput$ is the output of channel.  A {\it distortion function} is defined for any
status vector $\activevector\in\activesetcollection_{K;N}$ and a partition vector $\partition \in \partitionset_{K;N}$ as follows:
\begin{align}
	d(\activevector,\partition) = 
	\begin{cases}
		0, &
			\mbox{if}~ \forall 1\leq i<j\leq N,
			~~~(s_i = s_j = 1)\Longrightarrow (z_i\neq z_j)\\
		1, &\rm{otherwise}
	\end{cases}.
	\label{distortion1}
\end{align}
The objective is to design a transmission matrix $\codebook$ (that produces channel output $\mathbf{y}$) and a corresponding decoding function $\partition = g(\mathbf{X}, \channeloutput)$, so that $d(\activevector,g(\mathbf{X}, \channeloutput))=0$. We use a probabilistic model to study the problem in this paper. Assume that every $\tilde{\activevector} \in \activesetcollection_{N;K}$ has the same probability $p_{\activevector}(\tilde{\activevector}) = 1/{N \choose K}$, consider the average error for a given $\codebook$ and $g$, defined by:
\begin{align}
	P^{(N)}_e(\codebook) \triangleq \sum_{\activevector \in \activesetcollection_{N;K}}p(\activevector) \sum_{\tilde{\channeloutput}} p_{\channeloutput|\channeloutput_0}(\tilde{\channeloutput}|\codebook^\top \otimes \activevector)  \mathbf{1}(d(\activevector, g(\mathbf{X},\tilde{\channeloutput}))\neq 0) \nonumber
\end{align}
where $\mathbf{1}(A)$ is the indicator function, whose value is $1$ when $A$ occurs and 0 otherwise; note that we use $p(\activevector)$ instead of $p_{\activevector}(\tilde{\activevector})$. Recall that we denote by $T$, the number of rounds over which the users transmit on the channel. We say ratio $\sufficientbound$ is achievable if when $\frac{T}{\log(N)} \geq \sufficientbound+\xi$ (for any constant $\xi>0$), there exists a matrix $\codebook^*$ and $g^*$ such that  $P^{(N)}_e(\codebook^*) \xlongrightarrow{N \to \infty} 0$. This is actually a sufficient condition in terms of the lowerbound of $\frac{T}{\log(N)}$ to attain $P^{(N)}_e(\codebook^*) \xlongrightarrow{N \to \infty} 0$. In Section \ref{sec4}, we will derive this achievable ratio for the partition problem.
\begin{figure}
\centering
\includegraphics[width=2.5in]{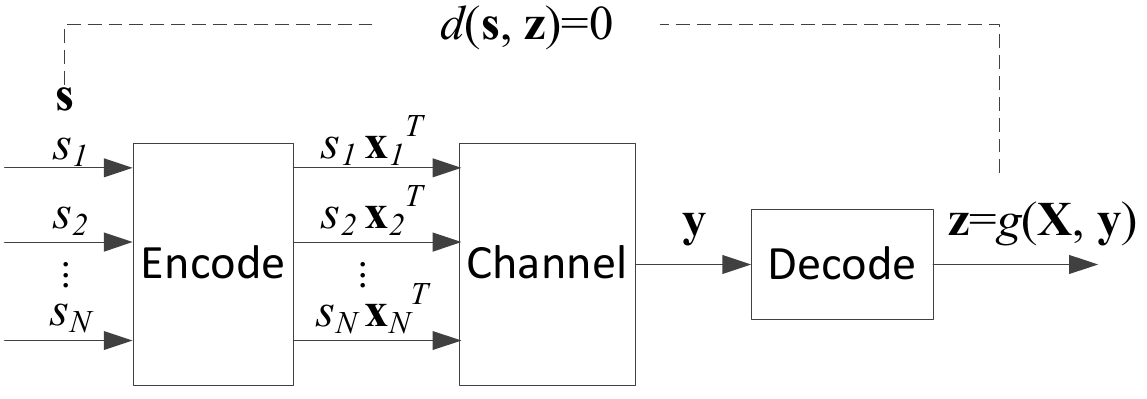}
\caption{Encoding-channel-decoding system with distortion criterion} \label{pfig3}
\end{figure}

\subsection{Random Coding}
	To find an achievable ratio $\sufficientbound$, we employ a random coding approach by generating each $x_{i,t}$ by independent and identical Bernoulli distribution with probability $p = \text{Pr}(x_{i,t}=1)$, and use the optimal Bayesian decoding method with risk function $P^{(N)}_e = \text{E}_{\mathbf{s}}[\mathbf{1}(d(\mathbf{s},g(\mathbf{X},\mathbf{y}))=0)]$; that is, for a realization of $\codebook$, when observing $\tilde{\channeloutput}$, we choose $\tilde{\partition}^* =g(\mathbf{X},\mathbf{\tilde{y}})$, so that $\tilde{\partition}^* = \arg\max_{\tilde{\partition} \in \partitionset_{K;N}} \text{E}_{\mathbf{s}}[\mathbf{1}(d(\mathbf{s},\tilde{\partition})=0)|
\tilde{\mathbf{y}}]$, with proper normalization one may have:
\begin{align}
	 \tilde{\partition}^* = \arg\max_{\tilde{\partition} \in \partitionset_{K;N}} W_{\tilde{\channeloutput};\codebook}(\tilde{\partition})
\end{align}
where 
\begin{align}
	W_{\tilde{\channeloutput};\codebook}(\tilde{\partition}) \triangleq \sum_{\activevector \in \activesetcollection_{N;K}} \mathbf{1}\left(d(\activevector,\tilde{\partition}) = 0\right)p_{\channeloutput|\channeloutput_0}(\tilde{\channeloutput}|\codebook^\top\otimes \activevector) \label{MAPdecode}
\end{align}
If there is more than one $\tilde{\partition}^*$ with the maximum value, we select any one. For notation simplicity, we will use $p(\channeloutput|\codebook^\top\otimes \activevector)$ instead of $p_{\channeloutput|\channeloutput_0}(\tilde{\channeloutput}|\codebook^\top\otimes \activevector)$ in the remainder of this paper. Similarly, we use $p(\channeloutput,\codebook^\top\otimes \activevector)$ instead of $p_{\channeloutput,\channeloutput_0}(\tilde{\channeloutput},\codebook^\top\otimes \activevector)$.

	Then, the average error over all realizations of $\codebook$ is
\begin{align}
	P^{(N)}_e \triangleq &\sum_{\codebook}Q(\codebook)P^{(N)}_e(\codebook)\nonumber\\
\overset{(a)}{=}&\sum_{\codebook}Q(\codebook)\sum_{\channeloutput}p(\channeloutput|\codebook^\top \otimes \activevector_0)\mathbf{1}(d(\activevector_0, g(\mathbf{X},\channeloutput))\neq 0), \label{averageerror}
\end{align}
where $Q(\codebook)$ denotes the probability of generating $\codebook$. Equality $(a)$ of Eq. \eqref{averageerror} is due to the symmetry in the generation of $\codebook$, where any particular $\activevector_0$ can be chosen as an input for our analysis. We will assume corresponding active set $\realactiveset=\{1,\ldots,K\}$ in the rest of the paper. 	Denote by $P_e^{(\infty)}$ the asymptotic value of $P_e^{(N)}$. Since if $P_e^{(\infty)} = 0$, there must exist an $\codebook^*$ with  $P^{(N)}_e(\codebook^*) \xlongrightarrow {N \to \infty} 0$. We will seek a $\sufficientbound$ so that when $T/(\log N) \geq \sufficientbound + \xi$ for any $\xi>0$, then $P_e^{(\infty)}= 0$. 
The optimal Bayesian decoding is quite complicated to analyze. In next section, we propose a sub-optimal decoding method to analyze the average error probability from a strong hypergraph coloring perspective.

\section{A Graph Coloring Approach for Decoding}\label{sec3}
	A hypergraph decoding approach is presented in this section. To better understand this method, all examples are given in $K=2$ case, in which a hypergraph becomes a graph. 
\subsection{Hypergraph view}
	Our decoding method is illustrated in Fig. \ref{pfig5}. For a given input $\activevector_0$, the channel output $\channeloutput$ is observed. This output could be different from $\channeloutput_0 = \codebook^\top \otimes \activevector_0$ due to the presence of noise. A possible set of active users  with a \lq\lq{}sufficiently large probability\rq\rq{} generating $\channeloutput$ can be inferred. Let
\begin{align}
	\activesetcollection_{\channeloutput} = \left\{\activevector\in \activesetcollection_{N;K}: p(\channeloutput|\codebook^\top \otimes \activevector)~\text{is sufficiently large}\right\}
\end{align} 
be the set of all such sets of active users. The idea is to select a $\partition = g(\mathbf{X},\mathbf{y})$ such that $\activesetcollection_{\channeloutput} \cap \{\mathbf{s}:d(\activevector,\partition)=0\}$ is maximized.

	For any given active set represented as vector $\mathbf{s}_0$,	we now outlines the stages used model the transmission and observation of channel feedback, and the construction of a partition $\mathbf{z}$ corresponding to $\mathbf{s}_0$.

\begin{figure}
\centering
\includegraphics[width=3.2in]{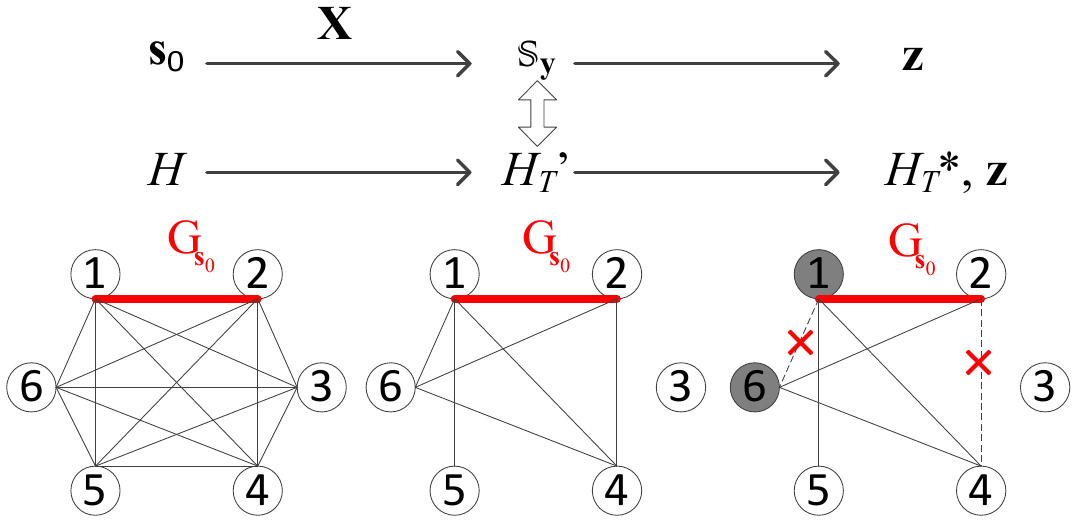}
\caption{An example of the graph view ($N=6$, $K=2$).} \label{pfig5}
\end{figure}

\begin{enumerate}
	\item \textbf{Source}: The input $\mathbf{s}$ is represented initially as an empty hypergraph $\mathcal{H}_0 = (\userIDset, \emptyset)$ with $N$ nodes and no hyperedges. This reflects our initial lack of knowledge about the active users.

	\item \textbf{Transmission and observation}: As the active users transmit over the channel using a transmission matrix $\mathbf{X}$ and the channel provides feedback $\mathbf{y}$, we modify $\mathcal{H}_0$ by adding $K$-elements hyperedges according to active sets with \lq\lq{}sufficiently large probability\rq\rq{}. Let the new hypergraph be $\hyperone = (\userIDset, E^{\rq{}}_T)$.

	\item \textbf{Partition}: After $T$ rounds the information obtained so for is in a hypergraph $\hyperone = (\userIDset,E^{\rq{}}_T)$, as explained before. The decoding process seeks to find the largest sub-hypergraph $\hypertwo \subseteq \hyperone$ (where $\hypertwo = (\userIDset,E^{*}_T)$ and $E^{*}_T \subseteq E^{\rq{}}_T$) such that $\hypertwo$ is strongly $K$-colorable\footnote{A hypergraph is strongly $K$-colorable iff there exists an arrangement of $K$ colors to nodes, such that no hyperedges contains two nodes of the same color.}. Observe that the $K$-coloring of $\hypertwo$ gives a $K$-partition, $\mathbf{z}$ of active vector $\mathbf{s}_0$.
\end{enumerate}

	It must be pointed out that the above hypergraph $\hypertwo$ may not correctly partition $\userIDset$ with respect to $\realactiveset$. This error is due to either (a) the hyperedge $\realactiveset$ not being present in $\hyperone$ or (b) $\realactiveset$ being deleted from $\hyperone$ to generate $\hypertwo$. However if $T/\log N$ is sufficiently large (see Section \ref{sec4}), we prove that the decoding is asymptotically error-free.

	Thus, the process can be represented as $\mathcal{H}_0 \to \hyperone \to \hypertwo, \partition$ corresponding to the expression from vectors $\activevector_0 \to \activesetcollection_{\channeloutput} \to \partition$, as shown in Fig. \ref{pfig5} by an example of $N=6$, $K=2$. Compared with group testing whose objective is to obtain $\hypertwo=\hyperone$ with only one hyperedge $\realactiveset$, our partition problem allows $\hyperone$ and $\hypertwo$ to have more hyperedges to be added, so less effort is needed.  This translates to higher achievable rate than that of the group testing problem. The objective is then to design an efficient $\codebook$, which essentially construct such a $\hyperone$ from which we can correctly obtain $\hypertwo$ and $\partition$. 

	An observation of the decoding method is that if the real edge $\realactiveset \in E^*_T$, definitely we will get a $\partition$ so that $d(\activevector_0,\partition)=0$; otherwise we may get a wrong partition. Since $\hypertwo \subseteq \hyperone$, we also need $\realactiveset \in E^{\rq{}}_T$. 

	As noted earlier, unlike the noiseless case \cite{wsh2, wu2014partition}, the hypergraph $\hyperone$ cannot be constructed sequentially for the noisy channel. (In noiseless case, $\mathcal{H}_0$ can be seen as a complete hypergraph with all $K$-element hyperedges. For any hyperedge $\activeset$, it will be deleted from $\mathcal{H}_{t-1}$ to obtain $\mathcal{H}_{t}$ if either condition is satisfied at round $t$: when $y_t=1$, no users in $\activeset$ transmits 1; or  when $y_t=0$, at least one of the users in $\activeset$ transmit 0.) The key reason is that in noiseless case, we have $\activeset \in E^{\rq{}}_T \iff \activevector \in \activesetcollection_{\channeloutput}$, (note that $\mathbf{y} = \mathbf{y}_0$, and $\mathbf{y}_0 = \mathbf{X}^\top \otimes \mathbf{s}_0$), so we always have the real hyperedge $\realactiveset$ in $\hyperone$; but in noisy case this is not satisfied, so that $\realactiveset$ may be not in $\hyperone$. Thus, the aim is to ensure for given $\mathbf{y}$, the generated hypergraph $\hyperone = (\userIDset,E^{\rq{}}_T)$ satisfies for any input $\activevector$,
\begin{align}
\activeset \in E^{\rq{}}_T \iff \activevector \in \activesetcollection_{\channeloutput_0}. \label{idealcondition}
\end{align}
asymptotically. 

\subsection{A Suboptimal decoding method}
	By adopting the strong typical set decoding approach \cite{csiszar2011information}, we can develop a joint edge construction method. Define a strong typical set $\mathcal{E}_{\epsilon}^T$, for any $a\triangleq (w,w_0) \in \{0,1\}^2$, and a small constant $\epsilon >0$,
\begin{align}
	\mathcal{E}_{\epsilon}^T = 
\left\{\begin{array}{ll}
\left[\tilde{\channeloutput}, \tilde{\channeloutput}_0\right]\in \left(\{0,1\}^2\right)^{T}:&\\
\left|\frac{1}{T}N\left(a|\left[\tilde{\channeloutput}, \tilde{\channeloutput}_0\right]\right)-p_{y,y_0}(a)\right|\leq\frac{\epsilon}{4}, &\text{if}~p_{y,y_0}(a)>0\\
N\left(a|\left[\tilde{\channeloutput}, \tilde{\channeloutput}_0\right]\right)=0,&\text{if}~p_{y,y_0}(a)=0
\end{array}\right\}\label{rules}
\end{align} 
where for any collection of $L$ Boolean $T$-bit vectors $[\mathbf{w}_1, \ldots, \mathbf{w}_L] \in \left(\{0,1\}^L\right)^T$, and a pattern $a \in \{0,1\}^L$, $N\left(a|[\mathbf{w}_1, \ldots, \mathbf{w}_L]\right)$ denotes the number of times of having pattern $a$ in the sequence $\{(w_{1,t}, \ldots, w_{L,t})\}_{t=1}^T$. And more specially,
\begin{align}
p_{y,y_0}(a) = p_{y_t,y_{0t}}(w,w_0) = 
\begin{cases}
p_{y_t|y_{0t}}(w|1)(1-(1-p)^K), &w_0 = 1\\
p_{y_t|y_{0t}}(w|0)(1-p)^K, &w_0 = 0
\end{cases}\label{yy0}
\end{align}
Thus, $\mathcal{E}_{\epsilon}^T$ is the strong typical set that sample frequencies are close to the true probability values.

The joint edge construction is first, choose a small $\epsilon >0$; then for a given $\codebook$, construct $\hyperone$ by:
\begin{align}
	\activeset \in E^{\rq{}}_T ~~\text{iff}~~ \left[\channeloutput, \bigvee_{i \in \activeset}\mathbf{x}_i\right] \in \mathcal{E}_{\epsilon}^T \label{contitionnoise}
\end{align}
Because of the feature of strong typical set, when $T\to \infty$, intuitively \eqref{idealcondition} almost surely holds for all $\codebook$ and the resulting $\channeloutput$, thus we will use this joint criteria to construct $\hyperone$ instead of the sequential method in the noiseless case. It\rq{}s a general method which is equivalent to the sequential approach in the absence of noise.

	The following steps express the action performed collectively by the active users in partitioning $\activeset$ with $\mathbf{y}$. We note that this is not an algorithm, just an illustration of the functional steps of transmission and decoding.

\emph{Joint edges construction decoding:}
\begin{enumerate}
\item When observing $\channeloutput$, all users construct $\hyperone$ by the rule that $\activeset \in E^{\rq{}}_T ~~\text{iff}~~ \left[\channeloutput, \bigvee_{i \in \activeset}\mathbf{x}_i\right] \in \mathcal{E}_{\epsilon}^T$;
\item Determine $\hypertwo \subseteq \hyperone$ by deleting the minimum number of hyperedges from $\hyperone$ such that $\hypertwo$ is $K$-strongly colorable, the output $\partition$ is a $K$-strong colouring of $\hypertwo$. 
\end{enumerate}

\begin{remark} 
	Optimal Bayesian decoding tries to find the maximum $\partition$ for $W_{\channeloutput;\codebook}(\partition)$ defined in \eqref{MAPdecode}, while our proposed joint method try to maximize:
\begin{align}
	W_{\channeloutput;\codebook}\rq{}(\partition) \triangleq \sum_{\activevector \in \activesetcollection_{N;K}} \mathbf{1}\left(d(\activevector,\partition) = 0\right)\mathbf{1}\left(\left[\channeloutput, \bigvee_{i \in \activeset}\mathbf{x}_i\right] \in \mathcal{E}_{\epsilon}^T\right)
\end{align}
i.e., we replace the weight $p(\channeloutput|\codebook^\top\otimes\activevector)$ in Bayesian decoding by quantizing it into $\{0,1\}$ according to the strong typical set $\mathcal{E}_{\epsilon}^T$. Thus, the proposed method is optimal when there is no noise, but suboptimal in the presence of noise. However, it has explicit geometric meaning in a hypergraph view which further enables us to derive an achievable bound based on an induced Markov chain, as shown in next two sections.  
\end{remark}

\subsection{Simplification for $K=2$ case}
	Before giving the main result of $K=2$ case, we first observe that the hypergraph $\hyper_0$, $\hyperone$, $\hypertwo$ are now graphs. We will provide a sub-optimal analysis to further simplify the calculation of $P_e^{(N)}$. The proposed decoding method includes two steps: to construct $\hyperone$; to find $\hypertwo$ and $\partition$. First, $a \in \{0,1\}^2$  imports four constraints on the construction of $\hyperone$ using $\mathcal{E}_{\epsilon}^T$ (see Eq. \eqref{rules}). These constraints are correlated, since $\sum_{w,w_0}N\left((w,w_0)|\left[\channeloutput, \bigvee_{i \in \activeset}\mathbf{x}_i\right]\right) = T$. So we will reduce the number of constraints by selecting only two of them $a = (w,w_0)$ with $w \in \{0,1\}, w_0 = 0$ as the constraints. Strictly speaking, assuming $0<p_{y,y_0}(a)<1$, since $p_y(w) =p_{y,y_0}(w,1)+p_{y,y_0}(w,0)$,
we have the following sufficient constraints for Eq. \eqref{rules} by selecting $\tilde{\epsilon} = 2 \epsilon$,
\begin{align}
	&(i,j) \in E^{\rq{}}_T \iff [\channeloutput, \mathbf{x}_i \vee \mathbf{x}_j] \in \mathcal{E}_{\epsilon}^T\nonumber\\
\Longrightarrow &\left|\frac{1}{T}N(w|\channeloutput) - p_{y}(w)\right| \leq \tilde{\epsilon}/4~\text{and} \nonumber\\
&\left|\frac{1}{T}N((1,0)|[\channeloutput, \mathbf{x}_i\vee\mathbf{x}_j])-p_{y,y_0}(1,0)\right|\leq \tilde{\epsilon}/4 ~\text{and} \nonumber\\
&\left|\frac{1}{T}N((0,0)|[\channeloutput, \mathbf{x}_i\vee \mathbf{x}_j])-p_{y,y_0}(0,0)\right|\leq \tilde{\epsilon}/4, \label{muij1}
\end{align}
and necessary constraints for Eq. \eqref{rules} with $\hat{\epsilon} = \frac{1}{2} \epsilon$,
\begin{align}
	&(i,j) \in E^{\rq{}}_T \iff [\channeloutput, \mathbf{x}_i \vee \mathbf{x}_j] \in \mathcal{E}_{\epsilon}^T\nonumber\\
\Longleftarrow &\left|\frac{1}{T}N(w|\channeloutput) - p_{y}(w)\right| \leq \hat{\epsilon}/4~\text{and} \nonumber\\
&\left|\frac{1}{T}N((w,0)|[\channeloutput, \mathbf{x}_i\vee\mathbf{x}_j])-p_{y,y_0}(w,0)\right|\leq \hat{\epsilon}/4, \forall w\in\{0,1\}, \label{muij2}
\end{align}
We will use these two constraints to analyze the sufficiency or necessity instead of Eq. \eqref{rules}. Actually these two constraints are approximately equivalent (up to a constant multiplying $\epsilon$). As we are primarily interested in the conditions under which $P_e^{(N)} \to 0$, rather than how fast it approaches 0. Once $\channeloutput$ is given, we can divide time slots into two blocks $\timeset^w \triangleq \{t:y_t=w\}, w \in \{0,1\}$ based on the value of $y_t$, and define $T^w = |\timeset^w|$ as the number of slots in each block; we also separate each codeword $\mathbf{x}_i$ into two blocks: $\mathbf{x}_i^w = [x_{i,t}]_{\{t\in\timeset^w\}}$, $w\in\{0,1\}$ according to the indices of $[y_t]_{t \in \timeset^w}$, and $N((w,0)|[\channeloutput, \mathbf{x}_i\vee \mathbf{x}_j]) = N(0|[ \mathbf{x}_i^w\vee \mathbf{x}_j^w])=N((0,0)|[ \mathbf{x}_i^w, \mathbf{x}_j^w])$, because $x_{i,t}^w \vee x_{j,t}^w = 0$ iff $x_{i,t}^w = 0,x_{j,t}^w = 0$. Thus we have:
\begin{align}
	(i,j) \in E^{\rq{}}_T \Longrightarrow \channeloutput \in \mathcal{E}^{T}_{y,\tilde{\epsilon}} ~\text{and}~ [\mathbf{x}^1_{i}, \mathbf{x}^1_j] \in \mathcal{E}^{T^1}_{1,\tilde{\epsilon}} ~\text{and}~[\mathbf{x}^0_{i},\mathbf{x}^0_{j}] \in \mathcal{E}^{T^0}_{0,\tilde{\epsilon}} \label{muij}
\end{align} 
where
\begin{align}
	\mathcal{E}^{T}_{y,\tilde{\epsilon}} = &\left\{\tilde{\channeloutput}\in \{0,1\}^{T}:
 \left|\frac{1}{T}N((0,0)|\tilde{\channeloutput})-p_{y}(w)\right|\leq \tilde{\epsilon}/4\right\}\nonumber\\
	\mathcal{E}^{T^w}_{w,\tilde{\epsilon}} = &\Bigg\{[\tilde{\mathbf{x}}_1,\tilde{\mathbf{x}}_2]\in \left(\{0,1\}^2\right)^{T^w}:
 \left|\frac{1}{T}N((0,0)|[\tilde{\mathbf{x}}_1,\tilde{\mathbf{x}}_2])-p_{y,y_0}(w,0)\right|\leq \tilde{\epsilon}/4\Bigg\},~w\in\{0,1\}\label{newrule1}
\end{align}
We note that the implication of Eq. \eqref{newrule1} can be reversed with $\tilde{\epsilon} = \hat{\epsilon}$ (as in Eq. \eqref{muij1} and \eqref{muij2}).
It means given $\channeloutput$, we can separately check $\channeloutput$ and the codewords in $\timeset^1$ and $\timeset^0$ by \eqref{muij} to construct $\hyperone$. Note that by Eq. \eqref{yy0} we have $p_{y,y_0}(1,0) = (1-p)^2q_{10}$, $p_{y,y_0}(0,0) = (1-p)^2(1-q_{10})$.

\begin{remark}
	 The physical meaning of \eqref{muij} is that given $\channeloutput$ satisfying \eqref{muij}, for each edge $(i,j)$, in block $\timeset^1$, we count the number of times that $(y_{t}, x_{i,t} \vee x_{j,t}) = (1,0)$, and in block $\timeset^0$ count the number times that $(y_{t}, x_{i,t} \vee x_{j,t}) = (0,1)$. If they are close to $p_{y,y_0}(1,0)T$ and $p_{y,y_0}(0,1)T$ (note that $p_{y,y_0}(0,1)T=(p_y(0)-p_{y,y_0}(0,0))T$), the edge $(i,j)$ is in $\hyperone$, otherwise not.  This is shown in Fig. \ref{pfig14}. Note that when there is no noise, these two numbers should be zero, that\rq{}s why a sequential construction can be used without noise, and why the problem is more difficult to solve in the presence of noise and we need to resort to a large deviation technique.
\end{remark}

\begin{figure}
\centering
\includegraphics[width=3.5in]{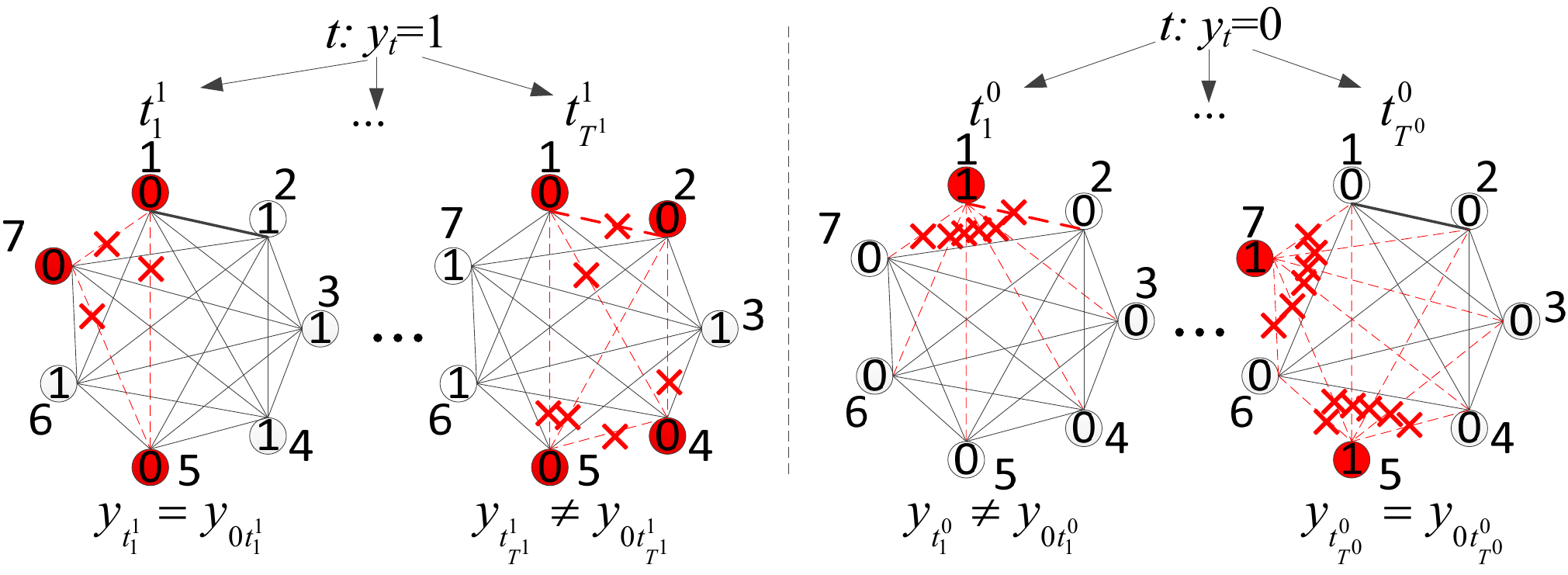}
\caption{An example of constructing edges in $\timeset^1$ and $\timeset^0$. Here $N=7$, and in the cycle the value of $x_{i,t}$ is shown. In some time, the codewords of the real edge $\{0,1\}$ might have $y_{t} \neq x_{1,t} \vee x_{2,t}$ because of the noise. We should count all such times in each block to decide if an edge is in $\hyperone$ or not in noisy case.}  \label{pfig14}
\end{figure}

	Another simplification of the $K=2$ case is in finding $\hypertwo$ and $\partition$. If $\partition$ is correct, we have the following sufficient conditions to achieve an acceptable partition:
\begin{align}
	d(\activevector_0,\partition)=0 \overset{(a)}{\Longleftarrow} ~&\realactiveset \in \hypertwo\nonumber\\
 \overset{(b)}{\Longleftarrow}  ~&\realactiveset \in \hyperone, ~\hyperone~\text{contains no 1-odd cycles}\nonumber
\end{align}
Eq. $(a)$ is obvious since $\hypertwo$ is 2-colorable. For Eq. $(b)$, if $K=2$, $\hypertwo$ is 2-colorable iff $\hypertwo$ contains no odd cycles. Now consider one type of odd cycles that contains $\realactiveset$ as an edge, called first type of odd cycles, denoted as \lq\lq{}1-odd cycles\rq\rq{}. Because all odd cycles in $\hyperone$ should be broken by deleting least edges to get $\hypertwo$, if there are no 1-odd cycles in $\hyperone$, $\realactiveset$ won\rq{}t be deleted and will be in $\hypertwo$. Thus, to find an achievable rounds of test, we deem decoding is correct iff $\realactiveset \in \hyperone$ and $\hyperone$ contains no 1-odd cycles for simplification, which is a sub-optimal analysis as belows.

\emph{Sub-optimal analysis for joint edges construction decoding:}
\begin{enumerate}
\item When observing $\channeloutput$, users construct $\hyperone$ by the rule that $[\channeloutput, \mathbf{x}_i \vee \mathbf{x}_j] \in \mathcal{E}_{\epsilon}^T$, which can be further simplified to $\channeloutput \in \mathcal{E}^{T}_{y,\tilde{\epsilon}} ~\text{and}~ [\mathbf{x}^1_{i}, \mathbf{x}^1_j] \in \mathcal{E}^{T^1}_{1,\tilde{\epsilon}} ~\text{and}~[\mathbf{x}^0_{i},\mathbf{x}^0_{j}] \in \mathcal{E}^{T^0}_{0,\tilde{\epsilon}}$ when considering sufficiency, or $\channeloutput \in \mathcal{E}^{T}_{y,\hat{\epsilon}} ~\text{and}~ [\mathbf{x}^1_{i}, \mathbf{x}^1_j] \in \mathcal{E}^{T^1}_{1,\hat{\epsilon}} ~\text{and}~[\mathbf{x}^0_{i},\mathbf{x}^0_{j}] \in \mathcal{E}^{T^0}_{0,\hat{\epsilon}}$ when considering necessity;
\item If $\realactiveset \in \hyperone$ and $\hyperone$ contains no 1-odd cycles, the decoding output is correct; otherwise it is wrong.
\end{enumerate}

	Since $\activevector_0$ is not known in advance, the second step can\rq{}t be used in application, thus this method is only used for analysis. In the next section we will derive a sufficient condition on $\frac{T}{\log N}$ to achieve $P_e^{(\infty)}= 0$ by this analysis; then in Section \ref{sec5}, we derive the necessary condition under which $\hyperone$ will definitely have 1-odd cycles in the framework of Bernoulli random coding and this sub-optimal analysis, which shows the limits of performance of this method.

\section{Main result: Sufficient condition for $K=2$ case} \label{sec4}
\begin{thm}\label{thm1}
	\textit{When $K=2$, if $\frac{T}{\log N} \geq \sufficientbound +\xi$ for any constant $\xi>0$, we have $P^{(\infty)}_e = 0$, where $\sufficientbound\triangleq 1/\max_{p} C(p,q_{10},q_{01})$, and} 
\begin{align}
	C(p,q_{10},q_{01}) = p_1\varphi_1 + p_0 \varphi_0, \label{rate}
\end{align}
\textit{$p_1 \triangleq p_{y}(1)=(1-(1-p)^2)(1-q_{01})+(1-p)^2q_{10}$, $p_0 \triangleq p_y(0)= 1-p_1$, and}
	\begin{align}
		\varphi_1 = \sup_{\lambda \in \mathbf{R}} \left(\lambda \frac{(1-p)^2q_{10}}{p_1} - \log \rho_+\right), ~
		\varphi_0 = \sup_{\lambda \in \mathbf{R}} \left(\lambda \frac{(1-p)^2(1-q_{10})}{p_0} - \log \rho_+\right)\nonumber\\
\rho_{+} = \frac{1}{2}(p+(1-p)e^{\lambda}+\sqrt{(p-(1-p)e^{\lambda})^2+4p(1-p)}).\nonumber
	\end{align}
\end{thm}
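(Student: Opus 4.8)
The plan is to bound the average error probability $P_e^{(N)}$ under the sub-optimal analysis described in Section~\ref{sec3}, namely by the probability that either (i) the real edge $\realactiveset = \{1,2\}$ fails to appear in $\hyperone$, or (ii) $\hyperone$ contains a 1-odd cycle through $\realactiveset$, and then show both probabilities vanish as $N\to\infty$ whenever $T/\log N \geq \sufficientbound + \xi$. By a union bound, $P_e^{(N)} \le \Pr(\realactiveset \notin \hyperone) + \sum_{\ell \ge 1} \Pr(\exists\ \text{1-odd cycle of length } 2\ell+1)$. For term (i), conditioning on $\channeloutput$ being jointly typical with $\channeloutput_0$ (which happens with probability $\to 1$ by the weak law of large numbers / standard typicality over the memoryless noise channel in \eqref{yy0}), the pair $[\channeloutput, \mathbf{x}_1 \vee \mathbf{x}_2]$ lies in $\mathcal{E}_\epsilon^T$ with probability $\to 1$, so $\Pr(\realactiveset \notin \hyperone)\to 0$ for \emph{any} positive $T$ growing with $N$; this term is not the bottleneck.

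The heart of the argument is term (ii). Fix the observed $\channeloutput$ (typical), which partitions $\timeset$ into $\timeset^1,\timeset^0$ with $T^w \approx p_w T$. A 1-odd cycle of length $2\ell+1$ is a sequence of distinct users $2 = v_0, v_1, \ldots, v_{2\ell}, v_{2\ell+1} = 1$ (so $v_1,\dots,v_{2\ell}$ range over the $N-2$ inactive users) such that every consecutive pair $(v_{m}, v_{m+1})$ is an edge of $\hyperone$, i.e. each satisfies the two block-constraints in \eqref{muij}. The number of such candidate cycles is $\Theta(N^{2\ell})$. For a single inactive-inactive edge $(i,j)$, the codeword blocks $\mathbf{x}_i^w, \mathbf{x}_j^w$ are i.i.d. Bernoulli$(p)$, independent of $\channeloutput$; the event $[\mathbf{x}_i^w,\mathbf{x}_j^w]\in\mathcal{E}^{T^w}_{w,\tilde\epsilon}$ is a large-deviation event for the empirical frequency of the pattern $(0,0)$, whose target value $p_{y,y_0}(w,0)$ is \emph{smaller} than its natural value $(1-p)^2 p_w$ — forcing fewer $(0,0)$ patterns than typical. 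The key probabilistic computation is: along a length-$(2\ell+1)$ path the consecutive edges share a vertex, so the relevant count in block $\timeset^w$ is, for each shared vertex $v_m$, governed by a sum $\sum_{t\in\timeset^w} \mathbf{1}(x_{v_{m-1},t}=0, x_{v_m,t}=0) + \mathbf{1}(x_{v_m,t}=0, x_{v_{m+1},t}=0)$ — this coupling through $x_{v_m,t}$ is exactly what induces a Markov chain on the states $x_{v_m,t}\in\{0,1\}$ along the path, with transfer matrix whose Perron eigenvalue is the $\rho_+$ appearing in the theorem. Applying the Gärtner–Ellis / tilted-measure bound to this chain gives, for each block $w$, a rate exponent, and $\varphi_w$ as defined is precisely the Legendre transform evaluating the exponential decay rate of the probability that a random path has the required (atypically small) $(0,0)$-count in block $\timeset^w$.

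Putting these together: the probability that a \emph{fixed} candidate 1-odd cycle of length $2\ell+1$ survives into $\hyperone$ decays like $e^{-(2\ell)(T^1 \varphi_1 / (2\ell) + \cdots)}$ — more carefully, like $e^{-2\ell(p_1\varphi_1 + p_0\varphi_0)T/(2\ell)\cdot(2\ell)} = e^{-C(p,q_{10},q_{01})\,T\,(1+o(1))}$ per "unit" of cycle, so that the expected number of surviving 1-odd cycles is at most $\sum_\ell \Theta(N^{2\ell}) e^{-2\ell\, C(p,q_{10},q_{01})\,T(1+o(1))}$ (up to the $\epsilon$-slack, which contributes an $O(\epsilon)$ perturbation to the exponent and can be absorbed into $\xi$). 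This geometric series in $\ell$ converges and tends to $0$ provided $N^{2}e^{-2C(p,q_{10},q_{01})T} \to 0$, i.e. $T > \log N / C(p,q_{10},q_{01})$; optimizing the free design parameter $p$ gives the threshold $\sufficientbound = 1/\max_p C(p,q_{10},q_{01})$, and the arbitrary $\xi>0$ absorbs the $\epsilon$-slack and lower-order terms. Finally one takes $N\to\infty$ then $\epsilon\to 0$ to conclude $P_e^{(\infty)}=0$, and invokes the averaging argument of \eqref{averageerror} to extract a good deterministic $\codebook^*$.

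The main obstacle I expect is the precise large-deviation analysis of the path-indexed event in term (ii): one must (a) correctly identify the transfer-matrix / Markov structure induced by the shared vertices along a path of arbitrary odd length, uniformly in $\ell$, (b) control the fact that the $\ell$ consecutive edges are \emph{not} independent (they share vertices), yet still get the clean product-form exponent $2\ell\, C\, T$, and (c) handle the interplay between the $\epsilon$-typicality slack on $\channeloutput$ (which perturbs $T^1/T$ and $T^0/T$ away from $p_1,p_0$) and the $\tilde\epsilon$-slack in the edge constraints, showing all of these perturb the exponent only by $O(\epsilon)$. Establishing that the per-cycle exponent is exactly governed by $\log\rho_+$ with $\rho_+$ the stated Perron root — and that this is tight enough to match $N^{2\ell}$ — is the technical crux.
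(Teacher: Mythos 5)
Your proposal follows essentially the same route as the paper's proof: a union bound over 1-odd cycles through $\realactiveset=\{1,2\}$, factorization of the per-cycle survival probability over the blocks $\timeset^1,\timeset^0$, a transfer-matrix/Markov-chain large-deviation bound along the cycle whose tilted Perron root is exactly the stated $\rho_+$, and a geometric series matched against the $N^{M-2}$ cycle count, optimized over $p$. The step you flag as the technical crux --- that the non-stationary chain's Fenchel--Legendre exponent is at least $(M-2)\,p_w\varphi_w T$ per block (i.e.\ proportional to the number of \emph{free} vertices, matching the combinatorial count), uniformly in the cycle length --- is precisely what the paper's Lemma 3 establishes via a convexity argument, so your outline matches the actual proof.
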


We can check when $q_{10}=q_{01}=0$, the result is exactly the same as that in the noiseless case in \cite{wsh2, wu2014partition}, and when $q_{10}+q_{01} =1$, or $p=0, 1$, $C(p,q_{10},q_{01})=0$, which means we can\rq{}t recover the partition with vanishing average error no matter how large $T$ is, since in this case $\channeloutput$ is independent of $\activevector$. Moreover, it can be seen if in another system $(q_{01}\rq{},q_{10}\rq{})=(1-q_{01},1-q_{10})$, the corresponding $p_w\rq{} = p_{1-w}$, $\varphi_w\rq{} = \varphi_{1-w}$, $\forall w\in\{0,1\}$, thus $C(p,q_{10},q_{01})$ and $\sufficientbound$ is symmetric with the center $(q_{01},q_{10})=(0.5,0.5)$. 

The complete proof is in Appendix \ref{appendix.a}. Below is a sketch of the main ideas in the proof.

	1) Assume the real edge $\realactiveset = \{1,2\}$, in order to calculate the error probability easily, we will consider $[\channeloutput, \mathbf{x}_1, \mathbf{x}_2]$ to be in a strong typical set through the proof:
\begin{align}
	\mathcal{A}_{\delta}^T = 
\left\{\begin{array}{ll}
\left[\tilde{\channeloutput}, \tilde{\mathbf{x}}_1,\tilde{\mathbf{x}}_2\right]\in \left(\{0,1\}^3\right)^{T}:&\\
\left|\frac{1}{T}N\left(a|\left[\tilde{\channeloutput}, \tilde{\mathbf{x}}_1,\tilde{\mathbf{x}}_2\right]\right)-p_{y,x_1,x_2}(a)\right|\leq\frac{\delta}{16}, &p_{y,x_1,x_2}(a)>0\\
N\left(a|\left[\tilde{\channeloutput}, \tilde{\mathbf{x}}_1,\tilde{\mathbf{x}}_2\right]\right)=0,&p_{y,x_1,x_2}(a)=0
\end{array}\right\}, \label{atypical}
\end{align} 
where $a\triangleq(w,u,v) \in \{0,1\}^3$, and $p_{y,x_1,x_2}(w,u,v) = p_{y,y_0}(w,u\vee v)p_x(u)p_x(v)$, $p_x(u) \triangleq \text{Pr}(x_{i,t} =u)$ is the Bernoulli pdf. Note that it is different from Eq. \eqref{rules} which is defined on $[\channeloutput, \mathbf{x}_i\vee \mathbf{x}_j]$ for any edge $(i,j)$, since we need more restriction on the codewords of the real edge to simplify analysis. If we choose $\delta = \tilde{\epsilon}$, by the sufficient condition of Eq. \eqref{muij}, we have $[\channeloutput, \mathbf{x}_1, \mathbf{x}_2] \in \mathcal{A}_{\tilde{\epsilon}}^T \Longrightarrow \{1,2\} \in E^{\rq{}}_T$ and $\channeloutput \in \mathcal{E}^{T}_{y,\tilde{\epsilon}}$, which makes the analysis easier. For simplification of the notation, we will use $\epsilon$ instead of $\tilde{\epsilon}$ in the rest of this section. Since for any event $A$, 
\begin{align}
	\text{Pr}(A) = &\text{Pr}(A, [\channeloutput, \mathbf{x}_1, \mathbf{x}_2] \in \mathcal{A}_{\epsilon}^T) + \text{Pr}(A, [\channeloutput, \mathbf{x}_1, \mathbf{x}_2] \notin \mathcal{A}_{\epsilon}^T)\nonumber\\
\leq &\max_{[\channeloutput, \mathbf{x}_1, \mathbf{x}_2] \in \mathcal{A}_{\epsilon}^T}\text{Pr}(A|[\channeloutput, \mathbf{x}_1, \mathbf{x}_2]) + \text{Pr}([\channeloutput, \mathbf{x}_1, \mathbf{x}_2] \notin \mathcal{A}_{\epsilon}^T)\nonumber
\end{align}
Based on the feature of strong typical set, $\text{Pr}([\channeloutput, \mathbf{x}_1, \mathbf{x}_2] \notin \mathcal{A}_{\epsilon}^T) \to 0$, as $T \to \infty$; and for the maximum, when $\epsilon$ is small, every element in the typical set is nearly the same. So asymptotically $[\channeloutput, \mathbf{x}_1, \mathbf{x}_2] \in \mathcal{A}_{\epsilon}^T$, we consider this condition is held in the following parts, and calculate the probability conditioning on a given $[\channeloutput, \mathbf{x}_1, \mathbf{x}_2]$.  Since in this case $\{1,2\}\in E^{\rq{}}_T$, we just need to consider the probability that $\hyperone$ contains 1-odd cycles. Assume the probability of existence of a particular 1-odd cycle of $M$ vertices in $\hyperone$ to be $p_{e;M}$, there are ${N-2 \choose M-2} (M-2)! \leq N^{M-2}$ such odd cycles and all of them are equiprobable due to the symmetry in generating $\codebook$. 
Thus, by union bound, we have
\begin{align}
	P_e^{(N)} \leq \sum_{M\geq 3, M~\text{is odd}} e^{(M-2)\log N} p_{e;M}\label{exp1}
\end{align}
By the physical meaning stated in Remark 3.2, we will consider if this particular odd cycle will be constructed in block $\timeset^1$ or $\timeset^0$, i.e., $p_{e;M} = \mu_{M}^1 \cdot \mu_{M}^0$,
where $\mu_{M}^w, w \in \{0,1\}$ is the probability that the codewords of every edge $(i,j)$ in the particular cycle satisfy $[\mathbf{x}^w_{i}, \mathbf{x}^w_j] \in \mathcal{E}^{T^w}_{w,\epsilon}$ in block $\timeset^w$ by Eq. \eqref{muij}.

2) In block $\timeset^1$, if a particular cycle $(1,2,i_1,\ldots, i_{M-2})$ is constructed, it means $(\mathbf{x}^1_1,\mathbf{x}^1_2)$, $(\mathbf{x}^1_2,\mathbf{x}^1_{i_1})$ ,$\ldots$, $(\mathbf{x}^1_{i_{M-2}},\mathbf{x}^1_1)$ are all in $\mathcal{E}^{T^1}_{1,\epsilon}$. WLOG, let\rq{}s consider a particular 1-odd cycle (1, \ldots, M), the cycle is constructed if for any edge $\{i, \overline{i+1}\}$(where $\overline{i} \triangleq i \mod M$, if $i>M$), the number of times in $t \in \timeset^1$ that $(x_{i,t},x_{\overline{i+1},t})=(0,0)$ should be closed to $(1-p)^2q_{10}T$, i.e, $|\frac{1}{T}N((0,0)|[\mathbf{x}^1_{i}, \mathbf{x}^1_{\overline{i+1}}]) - (1-p)^2q_{10}| \leq \epsilon/4$ by Eq. \eqref{newrule1} for any $i$. Sum all of them, it suffices to show $|\frac{1}{MT}\sum_{i, t\in \timeset^1} \mathbf{1}((x_{i,t}, x_{{\overline{i+1}},t})=(0,0)) - (1-p)^2q_{10}|\leq \epsilon/4$ to upperbound $\mu_{M}^1$. 

	We can see the states $(x_{1,t}, x_{2,t}), (x_{2,t}, x_{3,t}), \ldots, (x_{M,t}, x_{1,t})$ are correlated, actually, the latter depends on the former(it seems $(x_{M,t}, x_{1,t})$ also depends on $(x_{1,t}, x_{2,t})$, but since $x_{1,t}, x_{2,t}$ are known when given $\channeloutput$, it only depends on the former $(x_{M-1,t},x_{M,t})$). A Markov chain $\mathcal{L}_t = [(x_{1,t}, x_{2,t}), (x_{2,t}, x_{3,t}), \ldots, (x_{M,t}, x_{1,t})]^\top$ with length $M$ can be constructed for any $t\in \timeset^1$, with the initial state $(x_{1,t}, x_{2,t})$ known, and the last states $(x_{M,t}, x_{1,t})$ partially known. Further, for different $t_1 \neq t_2$,  $\mathcal{L}_{t_1}$ and $\mathcal{L}_{t_2}$ are independent. Thus, we can construct a Markov chain with length $MT^1$ by assembling all $\{\mathcal{L}_t\}_{t \in \timeset^1}$ together, as shown in Fig. \ref{pfig15}, and denote it as $[h_n]$. It is a non-stationary Markov chain and the transition probability is easy to write.
\begin{figure}
\centering
\includegraphics[width=3.2in]{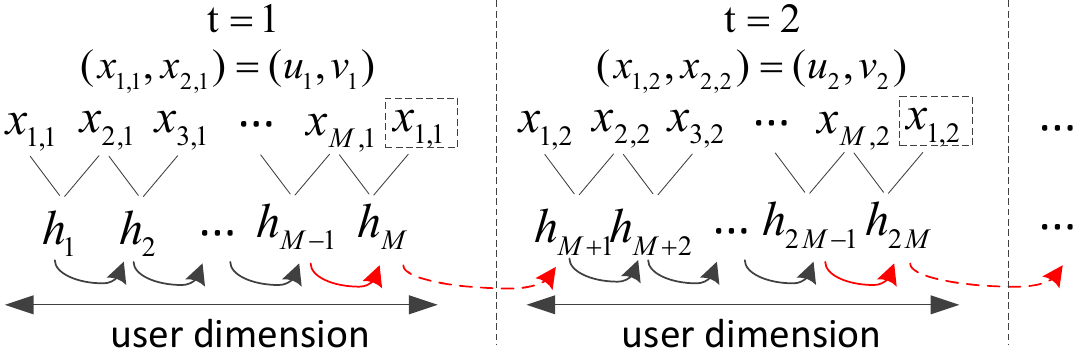}
\caption{Constructed Markov chain. Special transitions are drawn red.} \label{pfig15}
\end{figure}

	Then we need to estimate the deviation of the empirical average $\frac{1}{MT}N((0,0)|[h_n])$ from $(1-p)^2q_{10}$. By large deviation techniques for Markov chain\cite{Dembo2010}, we derive an upperbound of this probability:
\begin{align}
\mu_{M}^1 \leq \text{Pr}\left(\left|\frac{1}{MT}N((0,0)|[h_n])-(1-p)^2q_{10}\right|\leq \frac{\epsilon}{4}\right)
\leq e^{-(M-2)p_1 \varphi_1 T}
\end{align}
whose error exponent defined as $-\log(P)/(M-2)T$ is the first term in $C(p,q_{01},q_{10})$. Note that since $\mathbf{x}_1, \mathbf{x}_2$ are given, there are only $M-2$ degrees of freedom for the other $M-2$ codewords, which explains the $(M-2)$ factor here.

2) In block $\timeset^0$, the analysis is the same, now that the particular 1-odd cycle (1, \ldots, M) is constructed if $(\mathbf{x}^0_1,\mathbf{x}^0_2), (\mathbf{x}^0_2,\mathbf{x}^0_{3}) ,\ldots, (\mathbf{x}^0_{{M}},\mathbf{x}^0_1)$ are all in $\mathcal{E}^{T^0}_{0,\epsilon}$, which means $|\frac{1}{T}N((0,0)|[\mathbf{x}^0_{i}, \mathbf{x}^0_{\overline{i+1}}]) - (1-p)^2(1-q_{10})| \leq \epsilon/4$ for any $i$. Thus, we can construct the similar Markov chain $[h_n\rq{}]$ with length $MT^0$, and similarly, we will get the result:
\begin{align}
\mu_{M}^0 \leq \text{Pr}\left(\left|\frac{1}{MT}N((0,0)|[h_n\rq{}])-(1-p)^2(1-q_{10})\right|\leq \frac{\epsilon}{4}\right)\nonumber 
\leq e^{-(M-2)p_0 \varphi_0 T} \nonumber
\end{align}
whose error exponent results is the second term in $C(p,q_{01},q_{10})$, which completes the proof.

	
	The Markov Chain used in the proof shows the internal correlation structure of the partition problem, and it sheds light on a viable approach to solving more general cases with $K>2$ active users to be partitioned. 

\section{Necessary condition under the proposed sub-optimal method}\label{sec5}
	We have found a sufficient condition in terms of an upperbound on $\frac{T}{\log N}$ to achieve $P_e^{(\infty)} = 0$, another question to address is about a necessary condition to have $P_e^{(\infty)} = 0$ on $\frac{T}{\log N}$ under the same framework. We intend to show by randomly choosing elements of $\codebook$ with $p_{x}(1)=p$, no matter what $\epsilon$ we choose in the sub-optimal decoding method, if $\frac{T}{\log N} \leq 1/D(p,q_{10},q_{01}) - \xi$ for any $\xi>0$, we will have $P_e^{(\infty)} = 1$. If $p^* = \arg \max_p C(p,q_{10},q_{01})$, so that $\sufficientbound = 1/C(p^*,q_{10},q_{01})$, define $\necessarybound = 1/D(p^*,q_{10},q_{01})$, then we can see for $p_{x}(1)=p^*$, if $P_e^{(\infty)} = 0$, we should have $\frac{T}{\log N} \geq \necessarybound$. Comparison between $\sufficientbound$ and $\necessarybound$ shows the room where we could further improve the achievability result.

	Since for the proposed sub-optimal analysis, the decoding output is erroreous iff $\realactiveset \notin \hyperone$ or $\hyperone$ contains no 1-odd cycle. We will next derive $D(p,q_{10},q_{01})$ by considering when in $\hyperone$ there is definitely a 1-odd cycle of length 3. Denote $P_3 = \text{Pr}(\hyperone~\text{contains a 1-odd cycle of length 3, or}~\realactiveset \notin \hyperone)$, we have the following theorem.
	
\begin{thm}\label{thm2}
	\textit{When $K=2$, if elements of $\codebook$ are randomly generated by $p_x(1)=p$, for any constant $\xi>0$, if $\frac{T}{\log N} \leq 1/D(p,q_{10},q_{01}) -\xi$, no matter how $\epsilon$ is chosen, $P_3 \xrightarrow{N \to \infty} 1$ under the joint typical sequence based decoding, where 
\begin{align}
	D(p,q_{10},q_{01}) = p_1\varphi_1\rq{} + p_0 \varphi_0\rq{}, \label{rate}
\end{align}
and for $w \in \{0,1\}$,
\begin{align}
\varphi_w\rq{} = \frac{1}{p_w}\sup_{\lambda \in \mathbb{R}} \left(p_{y,x_1,x_2}(w,0,0)\log \frac{e^{2\lambda}}{p+(1-p)e^{2\lambda}} - 2p_{y,x_1,x_2}(w,1,0)\log \left((p+(1-p)e^{\lambda}\right)\right)\nonumber
\end{align}
}
\end{thm}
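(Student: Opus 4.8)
The plan is to mirror the argument sketched for Theorem 1, but now to obtain a \emph{lower} bound on the probability that a length-3 1-odd cycle is forced into $\hyperone$, and hence show $P_3 \to 1$. First I would condition on the real codewords: as in the sketch for Theorem~1, conditioned on $[\channeloutput,\mathbf{x}_1,\mathbf{x}_2]\in\mathcal{A}_\epsilon^T$ (which happens with probability tending to $1$), the real edge $\{1,2\}\in E^{\rq{}}_T$ asymptotically, so the error event reduces to the existence of a 1-odd cycle. For the length-3 case, a 1-odd cycle through $\{1,2\}$ is a triangle $\{1,2,i\}$ for some $i\in\{3,\ldots,N\}$. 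So I would lower-bound $P_3$ by the probability that \emph{at least one} of the $N-2$ triangles is present. Using the necessary (reversed) direction of the simplification \eqref{muij}–\eqref{newrule1} with $\hat\epsilon$, a triangle $\{1,2,i\}$ is present provided the counts $N((0,0)|[\mathbf{x}_1^w,\mathbf{x}_i^w])$ and $N((0,0)|[\mathbf{x}_2^w,\mathbf{x}_i^w])$ are within $\hat\epsilon/4$ of $p_{y,y_0}(w,0)T$ for $w\in\{0,1\}$ (edge $\{1,2\}$ is already in); since $\mathbf{x}_1,\mathbf{x}_2$ are fixed and typical, these are conditions purely on $\mathbf{x}_i$, whose $T$ coordinates are i.i.d.\ Bernoulli$(p)$.

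Next I would estimate $\pi_3 \triangleq \Pr(\{1,2,i\}\text{ is present}\mid [\channeloutput,\mathbf{x}_1,\mathbf{x}_2])$ from below. Split into blocks $\timeset^1,\timeset^0$; in block $\timeset^w$ the relevant quantity is, for each $t\in\timeset^w$, the joint pattern $(x_{1,t},x_{2,t},x_{i,t})$, and we need the empirical frequency of the patterns with $x_{i,t}=0$ matching $p_{y,x_1,x_2}(w,u,v)$-type targets. Because only $x_{i,t}$ is random, this is a one-dimensional large-deviations / local-CLT event for a sum of i.i.d.\ bounded variables; the probability that the empirical average lands in an interval of width $\Theta(\epsilon)$ around its required value is $e^{-\Theta(\epsilon^2)\cdot(\cdot)}$-close to the large-deviation rate, more precisely $\pi_3 \ge e^{-(p_1\varphi_1'+p_0\varphi_0'+o(1))T}$, where the exponents $\varphi_w'$ come from the Legendre transform (Cramér's theorem) of the relevant log-moment generating function of $x_{i,t}$: the term $p_{y,x_1,x_2}(w,0,0)\log\frac{e^{2\lambda}}{p+(1-p)e^{2\lambda}}$ corresponds to slots where $x_{1,t}=x_{2,t}=0$ (both edges get a $(0,0)$ when $x_{i,t}=0$, giving the weight-$2$ tilt $e^{2\lambda}$), and $-2p_{y,x_1,x_2}(w,1,0)\log(p+(1-p)e^{\lambda})$ to slots where exactly one of $x_{1,t},x_{2,t}$ is $1$. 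I would verify the formula for $\varphi_w'$ by writing out the conditional log-MGF of the pair of counts over block $\timeset^w$, tilting by a single $\lambda$ (the symmetric tilt is enough because the two target counts are equal), and taking the sup, with the $1/p_w$ normalizing $T^w \approx p_w T$.

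Finally I would assemble the union-of-triangles bound. Since the $N-2$ triangle events, conditioned on the fixed real codewords, are \emph{independent} (they involve disjoint codewords $\mathbf{x}_3,\ldots,\mathbf{x}_N$),
\begin{align}
\Pr(\text{no triangle present}\mid[\channeloutput,\mathbf{x}_1,\mathbf{x}_2]) = (1-\pi_3)^{N-2} \le e^{-(N-2)\pi_3} \le e^{-(N-2)e^{-(D(p,q_{10},q_{01})+o(1))T}},
\end{align}
and when $T/\log N \le 1/D - \xi$ the exponent $(N-2)e^{-(D+o(1))T}$ grows like $N^{\,\xi D - o(1)} \to \infty$, so this probability $\to 0$; combined with $\Pr([\channeloutput,\mathbf{x}_1,\mathbf{x}_2]\notin\mathcal{A}_\epsilon^T)\to 0$ and $\Pr(\realactiveset\notin\hyperone\mid\text{typical})\to 0$, we get $P_3\to 1$. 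The main obstacle, I expect, is the lower-bound direction of the large-deviation estimate for $\pi_3$: unlike Theorem~1, which only needed an \emph{upper} bound on a typical-set probability (trivial from Markov/Chernoff), here I need a matching \emph{lower} bound showing the empirical counts hit a small interval around the Cramér-optimal point with probability at least $e^{-(D+o(1))T}$. This requires either a local CLT argument around the tilted measure, or the standard change-of-measure lower bound (tilt to make the target value typical, then pay only a polynomial/entropy factor), and one must be careful that the interval half-width $\hat\epsilon/4$ is a fixed positive constant so the $o(1)$ terms genuinely vanish as $T\to\infty$ for each fixed $\epsilon$, and then that the conclusion is uniform enough that ``no matter how $\epsilon$ is chosen'' holds — i.e., the leading exponent $D(p,q_{10},q_{01})$ does not depend on $\epsilon$, only the $o(1)$ does.
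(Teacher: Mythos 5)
Your proposal is correct and follows essentially the same route as the paper: condition on $[\channeloutput,\mathbf{x}_1,\mathbf{x}_2]\in\mathcal{A}_{\epsilon}^T$, reduce to the $N-2$ independent triangle events so that $\text{Pr}(A_3|\cdot)=1-(1-\mu_3^0\mu_3^1)^{N-2}$, lower-bound the per-triangle probability by a large-deviations lower bound over the i.i.d.\ randomness of $\mathbf{x}_i$ split according to the patterns $(y_t,x_{1,t},x_{2,t})$, and drop $\epsilon$ by continuity. The only place the paper does more work than you sketch is in justifying your ``single symmetric tilt'' shortcut: it runs the G{\"a}rtner--Ellis theorem with a three-dimensional $\boldsymbol{\lambda}=[\lambda_{1,0},\lambda_{0,1},\lambda_{0,0}]^\top$ (checking exposed points so the infimum can be taken over the interior of the constraint set) and then verifies by a saddle-point computation that the optimum satisfies $\lambda_{1,0}=\lambda_{0,1}=\lambda_{0,0}/2$, which is exactly what collapses the rate to the one-parameter expression $\varphi_w'$ in the theorem.
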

	We can see $D(p,q_{10},q_{01})$ also has the symmetry with $q_{10}, q_{01}$. The proof is put in Appendix \ref{app2}. The techniques we have employed in the proof are similar as those in the proof of Theorem \ref{thm1}.

\section{Comparison} \label{sec6}
	As stated in the introduction, our partition reservation has close relations to direct transmission and group testing. Since the average error considered in direction transmission system is not the same as the definition in this paper, we just compare with the group testing. It should be noted that group testing has a distinct objective, namely, learning the status of all users, rather an acceptable partitioning. 

	For $K=2$, by the tight achievable bound derived in Theorem IV. 1 in \cite{6157065}, we can see for group testing with $p_x(1)=p$, if $\frac{T}{\log N} \geq 1/C_g(p,q_{01},q_{10}) +\xi$ for any $\xi>0$, $P_e^{(\infty)} = 0$; on the other hand, if $P_e^{(\infty)} = 0$, we must have $\frac{T}{\log N} \geq 1/C_g(p,q_{01},q_{10})$, where
\begin{align}
	C_g(p,q_{01},q_{10}) = \min \left\{I(x_{1,t}; x_{2,t}, y_t), \frac{1}{2} I(x_{1,t}, x_{2,t}; y_t)\right\}\nonumber
\end{align}
$I(\cdot;\cdot)$ is the mutual information, $x_{1,t}, x_{2,t}$ are i.i.d. Bernoulli random variables with probability $p$, and $y_t$ is the random variable obeying condition distribution $p_{y|y_0}(y_t|x_{1,t}\vee x_{2,t})$ for given $x_{1,t}\vee x_{2,t}$. Define  $C_{g} \triangleq 1/\max_{p} C_g(p)$, it corresponds to the derived $C_1$ and $C_2$ in this paper.

	Let $q_{01}= q_{10}\triangleq q$, which is a symmetric binary channel, we obtain Fig. \ref{pfig7} (since $1/C$ might be zero, we plot $1/C$ instead of $C$ in the figure). It is shown for any $p, q$, $\sufficientbound < C_{g}$, which means less effort is needed for partition problem.  Also, we can see the derived lowerbound of the ratio $\necessarybound < \sufficientbound$, which shows there is still room to improve the achievable bound. We believe a sharper bound on $\sufficientbound$ by, e.g. counting cycles more carefully, should lead us to an improved result, which is one of our ongoing works at present. 
\begin{figure}
\centering
\includegraphics[width=3.5in]{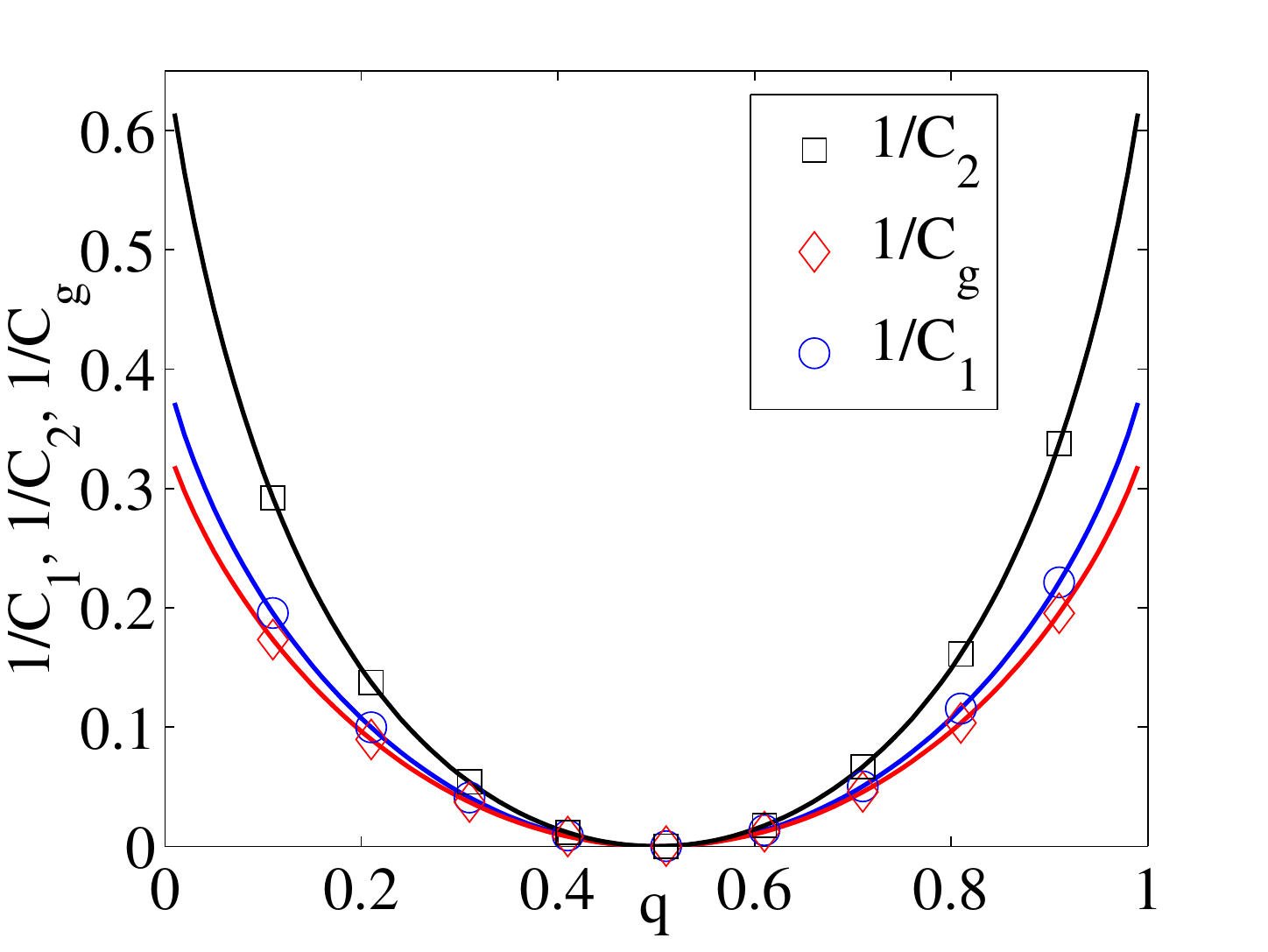}
\caption{Compare $\sufficientbound$, $C_{g}$ and $\necessarybound$.} \label{pfig7}
\end{figure}

\section{Conclusion}\label{sec7}
In this paper, we study a novel partitioning reservation problem over the noisy Boolean multi-access channels. We modify the $\hyperone$ construction process and the sequential decoding process in the noiseless case, and propose a general random coding approach and a sub-optimal jointly edge construction decoding method to obtain an achievable bound. A large deviation technique for non-stationary Markov chain is employed in the proof, which could be extended to study $K>2$ cases. Also, the derived achievable bound of $T \geq (\sufficientbound + \xi_1) \log N$ is better than the tight achievable bound in group testing. To study the tightness of this achievable bound, we also derive a necessary condition of $T$ so that if $T \leq ({\necessarybound}-\xi_2) \log N$, $P_e^{(\infty)} = 1$. It shows that the derived achievable bound is still able to be improved. The reason is that by union bound, there are too many overlapped odd cycles and thus we count vertices and edges repeatedly. This will be studied our in future works.

\appendices
\section{Proof of Theorem \ref{thm1}}\label{appendix.a}
\begin{IEEEproof}
	WLOG, we consider $0<p, q_{10}, q_{01}<1$ and $q_{10}+q_{01} \neq 1$, thus the continuity is guaranteed. We will show for any chosen $\epsilon>0$ of the sufficient constraints in \eqref{muij}(we use $\epsilon$ instead of $\tilde{\epsilon}$ for simplification of the notations), if $0 < \frac{\log N}{T}\leq C(p,q_{10},q_{01}) - \xi(\epsilon)$, there is always $P_e^{(\infty)} = 0$, where $\xi(\epsilon)>0$ is a function of $\epsilon$ and $\xi(\epsilon) \xrightarrow{\epsilon \to 0} 0$. Obviously, it is equivalent to the Theorem.

	Denotes $A\rq{}$ to be the event that $\hyperone$ contains 1-odd cycles, and $A \triangleq   A\rq{} \vee \left(\{1,2\} \notin E^{\rq{}}_T\right)$ (recall that $E^{\rq{}}_T$ is the set of edges of $\hyperone$), since by Eq. \eqref{muij}, Eq. \eqref{atypical},  $[\channeloutput, \mathbf{x}_1, \mathbf{x}_2] \in \mathcal{A}_{\epsilon}^T \Longrightarrow \{1,2\} \in E^{\rq{}}_T$, and $\channeloutput \in \mathcal{E}^{T}_{y,\epsilon}$, we have:
\begin{align}
	P_e^{(N)} \leq \text{Pr}(A) \leq \max_{[\channeloutput, \mathbf{x}_1, \mathbf{x}_2] \in \mathcal{A}_{\epsilon}^T}\text{Pr}(A\rq{}|[\channeloutput, \mathbf{x}_1, \mathbf{x}_2]) + \text{Pr}([\channeloutput, \mathbf{x}_1, \mathbf{x}_2] \notin \mathcal{A}_{\epsilon}^T)
\end{align}
	 By the features of strong typical set \cite{csiszar2011information}, we have $\text{Pr}\left([\channeloutput, \mathbf{x}_1, \mathbf{x}_2]\notin \mathcal{A}_{\epsilon}^{(T)}\right) \xrightarrow{T \to \infty} 0$. Thus, we just need to consider the probability of existence of any possible 1-odd cycle in $\hyperone$ on the condition that $[\channeloutput, \mathbf{x}_1, \mathbf{x}_2] \in \mathcal{A}_{\epsilon}^T$. In the following parts, this condition is assumed to be held.  For better statement, we denote $\timeset_{u,v}^w = \{t: 1\leq t \leq T, (y_t, x_{1,t}, x_{2,t}) = (w,u,v)\}$, and $T_{u,v}^w = |\timeset_{u,v}^w|$. Also, denote $\overline{T}_{u,v}^w = p_{y,x_1,x_2}(w,u,v)T$, and  $\overline{T}^w = p_w T$.

\subsection{Odd cycles for given $[\channeloutput, \mathbf{x}_1, \mathbf{x}_2] \in \mathcal{A}_{\epsilon}^T$}	
	For simplification, assume $N$ is an odd number. Consider any particular 1-odd cycle of length $M$, there are at most ${N-2 \choose M-2}(M-2)! \leq N^{M-2}$ such odd cycles out of $N$ nodes, and because of the symmetry in generating codewords, the existence of any of them in $\hyperone$ is equiprobable, thus WLOG we will select a particular one $\hyper_{e;M} \triangleq (1,2,\ldots, M)$. Denote $\tilde{p}_{M|[\channeloutput, \mathbf{x}_1, \mathbf{x}_2]}$ to be the probability that $\hyper_{e;M} \subseteq \hyperone$, and $P_{M|[\channeloutput, \mathbf{x}_1, \mathbf{x}_2]}$ the probability that $\hyperone$ contains 1-odd cycles of length $M$, by union bound, we have:
\begin{align}
	\max_{[\channeloutput, \mathbf{x}_1, \mathbf{x}_2] \in \mathcal{A}_{\epsilon}^T}\text{Pr}(A\rq{}|[\channeloutput, \mathbf{x}_1, \mathbf{x}_2])
\leq & \sum_{M=3,5, \ldots, N} \max_{[\channeloutput, \mathbf{x}_1, \mathbf{x}_2] \in \mathcal{A}_{\epsilon}^T} P_{M|[\channeloutput, \mathbf{x}_1, \mathbf{x}_2]} \label{PrE1}
\end{align}
and
\begin{align}
	P_{M|[\channeloutput, \mathbf{x}_1, \mathbf{x}_2]} \leq N^{M-2} \tilde{p}_{M|[\channeloutput, \mathbf{x}_1, \mathbf{x}_2]}
\end{align}
By Eq. \eqref{muij}, and note that we have already had $\channeloutput \in \mathcal{E}^{T}_{y,\epsilon}$, thus
\begin{align}
	\tilde{p}_{M|[\channeloutput, \mathbf{x}_1, \mathbf{x}_2]} = &\text{Pr}\left((1,2), (2,3) , \ldots, (M,1) \in E^{\rq{}}_T|[\channeloutput, \mathbf{x}_1, \mathbf{x}_2]\right)\nonumber\\
\leq &\text{Pr}\left([\mathbf{x}^1_{w}, \mathbf{x}^1_{\overline{w+1}}] \in \mathcal{E}^{T^1}_{1,\epsilon}, \forall w|[\channeloutput, \mathbf{x}_1, \mathbf{x}_2]\right) \text{Pr}\left([\mathbf{x}^0_{w}, \mathbf{x}^0_{\overline{w+1}}] \in \mathcal{E}^{T^0}_{0,\epsilon}, \forall w|[\channeloutput, \mathbf{x}_1, \mathbf{x}_2]\right)\nonumber\\
\triangleq &\mu^1_M \cdot  \mu^0_M,\label{maxmu}
\end{align}
which drives us to separately check the codewords in block $\timeset^0$ and $\timeset^1$.

\subsection{In $\timeset^1$}
	For $\mu^1_M \triangleq \text{Pr}\left([\mathbf{x}^1_{w}, \mathbf{x}^1_{\overline{w+1}}] \in \mathcal{E}^{T^1}_{1,\epsilon}, \forall w|[\channeloutput, \mathbf{x}_1, \mathbf{x}_2]\right)$, we can see the items $[\mathbf{x}^1_{w}, \mathbf{x}^1_{{\overline{w+1}}}]$ and $[\mathbf{x}^1_{w+1}, \mathbf{x}^1_{{\overline{w+2}}}]$ are correlated, so that we can\rq{}t separately calculate  $\text{Pr}\left([\mathbf{x}^1_{w}, \mathbf{x}^1_{{\overline{w+1}}}] \in \mathcal{E}^{T^1}_{1,\epsilon}|[\channeloutput, \mathbf{x}_1, \mathbf{x}_2]\right)$ for each $w$. However, we can see $[\mathbf{x}^1_{w+1}, \mathbf{x}^1_{{\overline{w+2}}}]$ only depends on the former item $[\mathbf{x}^1_{{w}}, \mathbf{x}^1_{{\overline{w+1}}}]$ , which inspires us to adopt Markov chain to model the problem. Before that, let\rq{}s further simply the calculation by summing up all the requirements of $[\mathbf{x}^1_{w}, \mathbf{x}^1_{\overline{w+1}}]$, $\forall 1\leq w \leq M$ to get a common one:
\begin{align}
	[\mathbf{x}^1_{w}, \mathbf{x}^1_{{\overline{w+1}}}] \in \mathcal{E}^{T^1}_{1,\epsilon}, \forall w \iff &\left|\frac{1}{T}N((0,0)|[\mathbf{x}^1_{w}, \mathbf{x}^1_{{\overline{w+1}}}] )-(1-p)^2q_{10}\right|\leq\epsilon/4, \forall w \nonumber\\
\Longrightarrow &\left|\frac{1}{MT}\sum_{w=1}^M \sum_{ t\in \timeset^1} \mathbf{1}((x_{w,t}, x_{{\overline{w+1}},t})=(0,0)) - (1-p)^2q_{10}\right|\leq \epsilon/4 \label{constraint1}
\end{align}
Thus, we suffice to find the probability that the pairs sequence $[(x^1_{{w},t}, x^1_{{\overline{w+1}},t})]_{1\leq w \leq M,t \in \timeset_1}$ satisfying Eq. \eqref{constraint1}, which is an upperbound of $\mu^1_M$.

\subsubsection{Markov chain}
	First, note that for any $t$, $(x_{1,t}, x_{2,t}), (x_{2,t}, x_{3,t}), \ldots, (x_{M,t}, x_{1,t})$ are correlated, and if $x_{1,t}, x_{2,t}$ are known, the latter depends only on the former: the first component should equal to the second one of the former, and the second component is generated randomly by Bernoulli pdf with probability $p$ (if the state is not $(x_{1,t}, x_{2,t})$ or $(x_{M,t}, x_{1,t})$). Thus, for any $t \in \timeset^1$, we can construct a Markov chain $\mathcal{L}_{t;x_{1,t}, x_{2,t}} = [(x_{1,t}, x_{2,t}), (x_{2,t}, x_{3,t}), \ldots, (x_{M,t}, x_{1,t})]$ of length $M$, whose initial state $(x_{1,t}, x_{2,t})$ is given, and the last state $(x_{M,t}, x_{1,t})$ is partially given. Second, for any two chain $\mathcal{L}_{t_1;x_{1,t_1}, x_{2,t_1}}$, $\mathcal{L}_{t_2;x_{1,t_2}, x_{2,t_2}}$, $t_1 \neq t_2$, they are independent. Which means we can concatenate them together to get a Markov chain with length $MT^1$. Because the initial state of each chain $\mathcal{L}_{t;x_{1,t}, x_{2,t}}$ is given by $(x_{1,t}, x_{2,t})$, and there are four possible values of $(x_{1,t}, x_{2,t}) \in \{0,1\}^2$, we will concatenate $\mathcal{L}_{t;x_{1,t}, x_{2,t}}$ according to the values of their initial states, by the order that first $(x_{1,t}, x_{2,t}) = (1,1)$, then $(1,0)$, $(0,1)$ and $(0,0)$ at last, as shown in Fig. \ref{pfig16}. 

\begin{figure}
\centering
\includegraphics[width=3.2in]{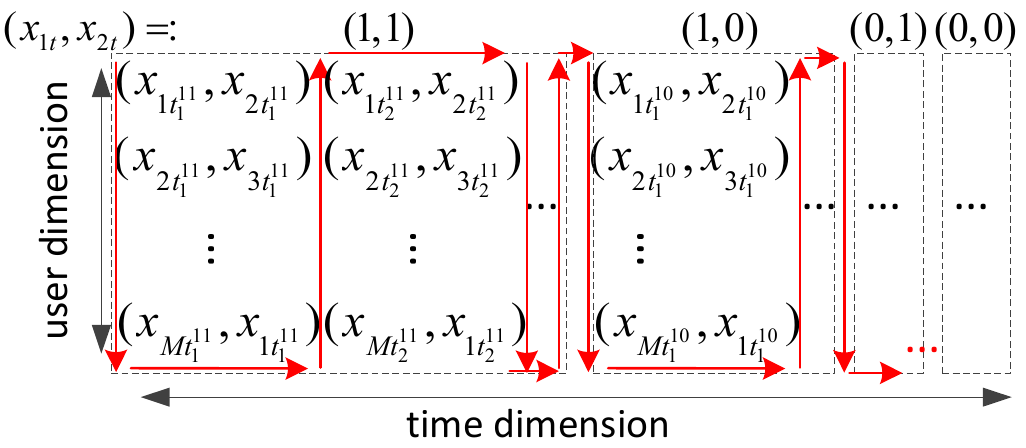}
\caption{Arrange $\{(x_{w,t}, x_{{\overline{w+1}},t})\}_{w, t}$ to a Markov chain. Here $t^{u,v}_j \in \timeset_{u,v}^1$; $t^{u_1,v_1}_i < t^{u_2,v_2}_j$ if $i<j$. The red arrow shows how the Markov chain is formed.} \label{pfig16}
\end{figure}

	Denote the obtained chain to be $[h_n]_{n=1}^{MT^1}$,  the Markov chain can be seen as a combination of a series of subsequences of length $M$, and can be partitioned into four blocks based on the values of the initial states of subsequences $(x_{1,t}, x_{\overline{2}, t})=(u,v) \in \{0,1\}^2$, each block has $T_{u,v}^1$ number of subsequences. Index possible values of $h_n$ to be $[(1,1), (1,0), (0,1), (0,0)]^\top = [h^0_i]_{i=1}^4$, we can define the transition matrix $\mathbf{\Pi}_{n+1|n}\triangleq [\pi_{n+1|n}(h^0_i|h^0_j)]_{i,j}$, where $\pi_{n+1|n}(h^0_i|h^0_j) \triangleq \text{Pr}(h_{n+1}=h^0_i|h_n=h^0_j)$. Thus, it is easy to write:
\begin{align}
	\mathbf{\Pi}_{n+1|n} = 
\begin{cases}
	\mathbf{T}_{u,v}, &\text{if}~n = k M,~ h_{n +1} =(u, v) \\
	\mathbf{C}_{u}, &\text{if}~n=k M-1,~h_{(k-1)M+1} =(u, v)\\
	\mathbf{\Pi}, &\text{otherwise}
\end{cases}
\label{transmatrix}
\end{align}
where $k$ is a natural number. The initial state $\pi_{1}((u,v)) = \mathbf{1}((u,v)=(1,1))$, and
\begin{align}
	\mathbf{T}_{11} \triangleq
	\begin{bmatrix}
	1 &1 &1 &1\\
	0 &0 &0 &0\\
	0 &0 &0 &0\\
	0 &0 &0 &0
	\end{bmatrix},~
	\mathbf{T}_{10} \triangleq
	\begin{bmatrix}
	0 &0 &0 &0\\
	1 &1 &1 &1\\
	0 &0 &0 &0\\
	0 &0 &0 &0
	\end{bmatrix},~
	\mathbf{T}_{01} \triangleq
	\begin{bmatrix}
	0 &0 &0 &0\\
	0 &0 &0 &0\\
	1 &1 &1 &1\\
	0 &0 &0 &0
	\end{bmatrix},
	\mathbf{T}_{00} \triangleq
	\begin{bmatrix}
	0 &0 &0 &0\\
	0 &0 &0 &0\\
	0 &0 &0 &0\\
	1 &1 &1 &1
	\end{bmatrix}\nonumber
\end{align}
\begin{align}
	\mathbf{C}_{1} \triangleq
	\begin{bmatrix}
	1 &0 &1 &0\\
	0 &0 &0 &0\\
	0 &1 &0 &1\\
	0 &0 &0 &0
	\end{bmatrix},~
	\mathbf{C}_{0} \triangleq
	\begin{bmatrix}
	0 &0 &0 &0\\
	1 &0 &1 &0\\
	0 &0 &0 &0\\
	0 &1 &0 &1
	\end{bmatrix},~
	\mathbf{\Pi} \triangleq
	\begin{bmatrix}
	p &0 &p &0\\
	1-p &0 &1-p &0\\
	0 &p &0 &p\\
	0 &1-p &0 &1-p
	\end{bmatrix}\nonumber
\end{align}
	This Markov chain is non-stationary, but except for the states transition related to $x_{1,t}, x_{2,t}$, we will get $\mathbf{\Pi}_{n+1|n} =\mathbf{\Pi}$ as the transition matrix, and actually $\mathbf{\Pi}$ will be the key factor in this problem. 

\subsubsection{Large deviation calculation}
	Now we just need to consider
\begin{align}
	\mu^1_M \leq \text{Pr}\left(\left|\frac{1}{MT^1}\sum_{n} \mathbf{1}(h_n = (0,0))- \frac{T}{T^1}(1-p)^2q_{10}\right|\leq \frac{T}{4T^1}\epsilon\right),
\end{align}
which is a problem of calculating a large deviation of the empirical means of a function($\mathbf{1}(h = (0,0))$) of a Markov chain $[h_n]$, and can be solved by the lemma in Chapter 2.3 in \cite{Dembo2010}. For completeness, we write the lemma here.

\begin{lem}
	\label{lemma1}
	\emph{For any random process $[{h}_{n}]_{n=1}^N$, for any function $f(\cdot)$  and the empirical means $W_{N} =\frac{1}{N}\sum_{n=1}^{N} f(h_n)$,
define the logarithmic moment generating function of $W_{N}$:
\begin{align}
	\Lambda_N(\lambda) \triangleq \log E[e^{\lambda W_N}].
\end{align}
If $\overline{x} = E[W_{N}]$ is finite, then for any non-empty closed interval $F=[a,b]$, we have:
\begin{align}
	\text{Pr}(W_{N} \in F) \leq 
\begin{cases}
	e^{-\Lambda_{N}^*(Na)}, &\text{if}~ \overline{x} < a\\
	e^{-\Lambda_{N}^*(Nb)}, &\text{if}~ \overline{x} > b\\
\end{cases}
\end{align}
where:
\begin{align}
	\Lambda^*_{N}(x) \triangleq \sup_{\lambda \in \mathbb{R}} \{\lambda x - \Lambda_N(\lambda)\}
\end{align}
is called the Fenchel-Legendre transform of $\Lambda_N(\lambda)$.}
\end{lem}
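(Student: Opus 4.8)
The plan is to prove this as the classical Cram\'er--Chernoff upper bound --- essentially the proof of Lemma~2.2.5 in \cite{Dembo2010} --- using only convexity of the log-moment generating function together with Markov's inequality; no ergodicity or mixing property of the process $[h_n]$ is needed. First I would reduce the two-sided event to a one-sided tail: since $\overline{x}=E[W_N]$ is finite and, in the first case, $\overline{x}<a$, we have $\{W_N\in[a,b]\}\subseteq\{W_N\ge a\}$, so it suffices to bound $\text{Pr}(W_N\ge a)$; the case $\overline{x}>b$ is handled symmetrically by bounding $\text{Pr}(W_N\le b)$ (equivalently, replacing $f$ by $-f$).

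For the one-sided bound I would apply the exponential Markov inequality to the unnormalised sum $S_N=\sum_{n=1}^{N}f(h_n)=NW_N$: for every $\lambda\ge 0$, $\text{Pr}(W_N\ge a)=\text{Pr}\bigl(e^{\lambda S_N}\ge e^{\lambda N a}\bigr)\le e^{-\lambda N a}\,E\bigl[e^{\lambda S_N}\bigr]=\exp\bigl(-(\lambda N a-\Lambda_N(\lambda))\bigr)$, where $\Lambda_N$ is the log-moment generating function in the statement. Optimising over $\lambda\ge 0$ gives $\text{Pr}(W_N\ge a)\le\exp\bigl(-\sup_{\lambda\ge 0}(\lambda N a-\Lambda_N(\lambda))\bigr)$. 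It then remains to replace the constrained supremum by the unconstrained one, i.e.\ to show $\sup_{\lambda\ge 0}(\lambda N a-\Lambda_N(\lambda))=\sup_{\lambda\in\mathbb{R}}(\lambda N a-\Lambda_N(\lambda))=\Lambda_N^{*}(Na)$.

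This last equality is the only real obstacle, and I would settle it by convexity. By H\"older's inequality $\Lambda_N$ is convex on $\mathbb{R}$; in the application of interest $f$ is an indicator, hence bounded, so $\Lambda_N$ is finite and differentiable everywhere with $\Lambda_N'(0)=E[S_N]=N\overline{x}$ (in full generality one uses the subgradient inequality at $0$ instead of differentiability). Consequently, for $\lambda<0$ the subgradient inequality gives $\lambda N a-\Lambda_N(\lambda)\le \lambda N a-\Lambda_N(0)-\lambda N\overline{x}=\lambda N(a-\overline{x})\le 0$, which equals the value of the objective at $\lambda=0$; hence no negative $\lambda$ can beat $\lambda=0$, so the supremum over $\lambda\ge 0$ coincides with the supremum over $\mathbb{R}$. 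This yields $\text{Pr}(W_N\ge a)\le e^{-\Lambda_N^{*}(Na)}$, and combined with the first-paragraph reduction --- together with its mirror image for the case $\overline{x}>b$, using $\lambda\le 0$ and $\Lambda_N'(0)=N\overline{x}>Nb$ --- the lemma follows.
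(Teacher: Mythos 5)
The paper does not actually prove this lemma: it is quoted verbatim ``for completeness'' from Chapter~2 of Dembo--Zeitouni (it is essentially Lemma~2.2.5 there), so there is no in-paper argument to compare yours against. Your proof is the standard Cram\'er--Chernoff argument and is the right one: the reduction of the two-sided event to a one-sided tail, the exponential Markov inequality, and the convexity/Jensen step showing that the supremum restricted to $\lambda\ge 0$ (resp.\ $\lambda\le 0$) already equals the unconstrained supremum because $\lambda\mapsto \lambda Na-\Lambda_N(\lambda)$ is nonpositive on the wrong half-line when $\overline{x}<a$ (resp.\ $\overline{x}>b$). For the last step you do not even need differentiability: Jensen gives $\Lambda_N(\lambda)\ge \lambda\, E[\,\cdot\,]$ directly, which is the subgradient inequality you invoke.

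One point deserves an explicit sentence in your write-up. You identify $E[e^{\lambda S_N}]$ with $e^{\Lambda_N(\lambda)}$, whereas the statement literally defines $\Lambda_N(\lambda)=\log E[e^{\lambda W_N}]$ with $W_N=S_N/N$, so that $\log E[e^{\lambda S_N}]=\Lambda_N(N\lambda)$. Under the literal definition, the Chernoff exponent you obtain is $\sup_{\lambda}\{\lambda Na-\Lambda_N(N\lambda)\}$, which is \emph{not} $\Lambda_N^*(Na)=\sup_{\lambda}\{\lambda Na-\Lambda_N(\lambda)\}$; the two differ by a rescaling of $\lambda$ by $N$. In other words, the lemma's conclusion $e^{-\Lambda_N^*(Na)}$ is only correct if $\Lambda_N$ inside the Fenchel--Legendre transform is read as the log-moment generating function of the \emph{sum} $S_N$ rather than of the average $W_N$ --- which is the reading your proof silently adopts, and also the one consistent with how the paper uses the lemma downstream (it only ever evaluates $\Lambda_{MT^1}$ at arguments of the form $MT^1\lambda$). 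So your argument is sound, but you should state which normalization of $\Lambda_N$ you are proving the bound for, since the statement as printed is internally inconsistent on exactly this factor of $N$.
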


	Thus, the key problem of is to calculate the $\Lambda_{MT^1}(MT^1 \lambda)$, $\Lambda^*_{MT^1}(x)$ and $\overline{x} \triangleq E(W_{MT^1})$, where $W_{MT^1} =\frac{1}{MT^1}\sum_{n=1}^{MT^1} f(h_n)$ for our Markov chain $[h_n]$, with the function $f(h_n)=\mathbf{1}(h_n = (0,0))$. By direct calculation, we have the results:
\begin{lem}
\emph{
	For $[h_n]$ with transition matrix as \eqref{transmatrix}, we have
\begin{align}
	\Lambda_{MT^1}\left(MT^1\lambda\right) = &\sum_{(u,v)\in \{0,1\}^2} T_{u,v}^1\log (g_{u,v}(M,\lambda))
\end{align}
where
\begin{align}
	g_{1,1}(M,\lambda) = &J_{M-1}(\lambda)+(1-\alpha)J_{M-2}(\lambda)\nonumber\\
	g_{1,0}(M,\lambda) = g_{0,1}(M,\lambda) = &J_{M-1}(\lambda)\nonumber\\
	g_{0,0}(M,\lambda) = &e^{2\lambda}\left(J_{M-1}(\lambda)-p\left(1-e^{-\lambda}\right) J_{M-2}(\lambda)\right)
\end{align}
$\rho_{+}$, $\rho_{-}$ are the larger and smaller eigenvalues of $\mathbf{\Pi}$ (of rank 2), and
\begin{align}
	J_M(\lambda) = &\frac{\rho_{+}^M - \rho_{-}^M}{\rho_{+} - \rho_{-}}\\
	\alpha = &\rho_++\rho_-=p+(1-p)e^{\lambda}\label{alpha}\\
	\beta = &\rho_+-\rho_-= \sqrt{(2p-\alpha)^2+4p(1-p)}\label{beta}
\end{align}
When $\epsilon$ is sufficiently small,
\begin{align}
\overline{x}=&(1-p)^2 - \frac{3(1-p)^2}{M}+\frac{(1-p)(T_{1,1,0}+T_{1,0,1})+(3-2p)T_{1,0,0}}{MT^1} \nonumber\\
&\begin{cases}
	> \frac{T}{T^1}((1-p)^2q_{10}+\epsilon), &q_{01}+q_{10}<1;\\
	<\frac{T}{T^1}((1-p)^2q_{10}-\epsilon), &q_{01}+q_{10}>1.
\end{cases}
\end{align}
}
\end{lem}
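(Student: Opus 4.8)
I would carry out the ``direct calculation'' announced before the statement, organizing it around the block structure of $[h_n]_{n=1}^{MT^1}$: that chain is a concatenation of $T^1$ mutually \emph{independent} length-$M$ subsequences $\mathcal L_{t;x_{1,t},x_{2,t}}$, the $t$-th having its first state pinned to $(x_{1,t},x_{2,t})=(u,v)$, and there are $T_{u,v}^1$ of them of each initial type $(u,v)\in\{0,1\}^2$. First I would use this independence to factorize the moment generating function,
\begin{align}
E\!\left[e^{\lambda\sum_{n=1}^{MT^1}\mathbf 1(h_n=(0,0))}\right]=\prod_{(u,v)\in\{0,1\}^2}g_{u,v}(M,\lambda)^{T_{u,v}^1},\qquad g_{u,v}(M,\lambda)\triangleq E\!\left[e^{\lambda\sum_{j=1}^M\mathbf 1(h_j=(0,0))}\,\big|\,h_1=(u,v)\right],
\end{align}
and take logarithms; this is exactly the claimed expression for $\Lambda_{MT^1}(MT^1\lambda)$, and (together with the mean $\overline x$) it is what gets fed into Lemma~\ref{lemma1}. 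So everything reduces to computing $g_{u,v}(M,\lambda)$ and the per-subsequence mean.

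Second, for $g_{u,v}(M,\lambda)$ I would pass to a transfer-matrix computation on the single-bit chain $x_2\to x_3\to\cdots\to x_M$: the pair-chain is deterministic in its first coordinate, so the only free randomness is $x_3,\dots,x_M\sim\mathrm{Bernoulli}(p)$. The relevant $2\times2$ transfer matrix, rows/columns indexed by $\{0,1\}$ and carrying the tilt $e^{\lambda}$ on the pattern $(0,0)$, is $B(\lambda)$ with $B(\lambda)_{b',b}=p_x(b')\,e^{\lambda\mathbf 1(b=0)\mathbf 1(b'=0)}$; its trace is $\alpha=p+(1-p)e^{\lambda}$ and its determinant is $p(1-p)(e^{\lambda}-1)=\rho_+\rho_-$, so its eigenvalues are precisely the $\rho_\pm$ of the statement, $\beta=\rho_+-\rho_-=\sqrt{(2p-\alpha)^2+4p(1-p)}$, and Cayley--Hamilton gives $B(\lambda)^k=J_k(\lambda)B(\lambda)-\rho_+\rho_-J_{k-1}(\lambda)I$ with $J_k=(\rho_+^k-\rho_-^k)/(\rho_+-\rho_-)$. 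Then $g_{u,v}$ is $B(\lambda)^{M-2}$ flanked by boundary weights: on the right, the vector selecting the pinned value $x_2=v$, times the scalar $e^{\lambda\mathbf 1(u=v=0)}$ for the first pair-state $(u,v)$; on the left, the covector weighting the partially pinned terminal pair $(x_M,u)$ and summing over $x_M$, namely $\mathbf 1^\top\mathrm{diag}(e^{\lambda},1)$ when $u=0$ and $\mathbf 1^\top$ when $u=1$ --- this is where the matrices $\mathbf C_u$ enter. Expanding with $B(\lambda)^{M-2}=J_{M-2}B(\lambda)-\rho_+\rho_-J_{M-3}I$ and simplifying via $J_{M-1}=\alpha J_{M-2}-\rho_+\rho_-J_{M-3}$ collapses the three cases to $g_{1,1}=J_{M-1}+(1-\alpha)J_{M-2}$, $g_{1,0}=g_{0,1}=J_{M-1}$, and $g_{0,0}=e^{2\lambda}\big(J_{M-1}-p(1-e^{-\lambda})J_{M-2}\big)$.

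Third, for $\overline x=E[W_{MT^1}]$ I would just use linearity of expectation on each subsequence: in a length-$M$ subsequence with initial pair $(u,v)$ the expected number of $(0,0)$-states is $\mathbf 1(u=v=0)+(1-p)\mathbf 1(v=0)+(1-p)\mathbf 1(u=0)+(M-3)(1-p)^2$, the first three terms from the pairs $(u,v)$, $(v,x_3)$, $(x_M,u)$ and the bulk term from the $M-3$ interior pairs of two free Bernoulli bits; summing over $t\in\timeset^1$, weighting by $T_{u,v}^1$ and dividing by $MT^1$ yields the stated closed form for $\overline x$. For the final inequality I would use that on $\mathcal A_\epsilon^T$ the ratios $T^1/T$ and $T_{u,v}^1/T$ are within $O(\epsilon)$ of $p_1$ and $p_{y,x_1,x_2}(1,u,v)$, so $\overline x$ is within $O(\epsilon)+O(1/M)$ of $(1-p)^2$ while $\tfrac{T}{T^1}(1-p)^2q_{10}$ is within $O(\epsilon)$ of $(1-p)^2q_{10}/p_1$; since $p_1-q_{10}=(1-(1-p)^2)(1-q_{01}-q_{10})$, the sign of the difference is that of $1-q_{01}-q_{10}$ (a short computation confirms this for every odd $M\ge3$, using the typical-set values of $T^1_{1,0},T^1_{0,1},T^1_{0,0}$ rather than the $M\to\infty$ limit), and fixing $\epsilon$ small enough gives the two-case conclusion.

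The step I expect to be the real work is the second one: the chain is non-stationary exactly at the $\mathbf C_u$ and $\mathbf T_{u,v}$ transitions, so the boundary covectors and vectors must be inserted so that precisely $M-2$ factors of $B(\lambda)$ survive and the algebra produces $J_{M-1},J_{M-2}$ (not $J_{M-2},J_{M-3}$); one must also check that $g_{1,0}=g_{0,1}$ and that the prefactor $e^{\lambda\mathbf 1(u=v=0)}$ recombines correctly in the $g_{0,0}$ case, where the terminal weight $\mathrm{diag}(e^{\lambda},1)$ and the scalar $e^{\lambda}$ both appear. The remaining steps (factorization, the mean by linearity, and the $\epsilon$-perturbation of the final inequality) are routine by comparison.
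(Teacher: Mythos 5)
Your proposal is correct, and in fact it supplies the calculation that the paper omits entirely: the paper's ``proof'' of this lemma is the single phrase ``By direct calculation,'' so there is no written argument to diverge from. Your three steps are exactly the intended ones, and they check out against everything the paper does state. The factorization of the moment generating function over the $T^1$ independent length-$M$ subsequences, grouped by initial state $(u,v)$, is valid because the inter-subsequence transitions $\mathbf{T}_{u,v}$ are deterministic resets; this gives the claimed form $\Lambda_{MT^1}(MT^1\lambda)=\sum_{u,v}T^1_{u,v}\log g_{u,v}(M,\lambda)$. Your $2\times 2$ transfer matrix $B(\lambda)$ has trace $\alpha=p+(1-p)e^{\lambda}$ and determinant $p(1-p)(e^{\lambda}-1)$, hence exactly the eigenvalues $\rho_\pm$ stated in the lemma (they are also the two nonzero eigenvalues of the tilted $4\times 4$ matrix $\mathbf{\Pi}$, by lumping, which reconciles your computation with the paper's phrasing), and your boundary-weight bookkeeping reproduces the stated $g_{u,v}$: e.g.\ at $M=3$ one gets $g_{1,0}=g_{0,1}=\alpha=J_2$, $g_{1,1}=\alpha+(1-\alpha)J_1=1$, and $g_{0,0}=e^{\lambda}(p+(1-p)e^{2\lambda})=e^{2\lambda}(J_2-p(1-e^{-\lambda})J_1)$, all matching direct enumeration. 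The mean via linearity gives precisely $(1-p)^2-\tfrac{3(1-p)^2}{M}+\tfrac{(3-2p)T^1_{0,0}+(1-p)(T^1_{1,0}+T^1_{0,1})}{MT^1}$, and for the sign of $\overline{x}-\tfrac{T}{T^1}(1-p)^2q_{10}$ the key identities are $p_1-q_{10}=(1-(1-p)^2)(1-q_{01}-q_{10})$ together with the fact that the $1/M$ correction equals $\tfrac{(1-p)^2(1-q_{01}-q_{10})}{p_1}\cdot\tfrac{p(3p-4)}{M}$, so the total is $\tfrac{(1-p)^2(1-q_{01}-q_{10})}{p_1}\bigl(2p-p^2+\tfrac{p(3p-4)}{M}\bigr)$ with the bracket at least $2p/3>0$ for all $M\ge 3$; the gap is therefore bounded away from zero uniformly in $M$, which is what licenses absorbing the $\pm\epsilon$ for $\epsilon$ small. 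I find no gap in your argument.
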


	So with the lemmas above, we can bound $\mu_M^1$ by:
\begin{align}
	\mu_M^1 \leq &\text{Pr}\left(W_{MT^1} \in \frac{T}{T^1}\left[(1-p)^2q_{10}-\epsilon, (1-p)^2q_{10}+\epsilon\right]\right)\nonumber\\
\leq&\begin{cases}
	e^{-\Lambda_{MT^1}^*\left(MT((1-p)^2q_{10}+\epsilon)\right)}, &q_{01}+q_{10}<1;\\
	e^{-\Lambda_{MT^1}^*\left(MT ((1-p)^2q_{10}-\epsilon)\right)}, &q_{01}+q_{10}>1.
\end{cases}
\label{originalmu}
\end{align}
To get the expression in the theorem, we will use the continuity to drop the $\epsilon$, and substitute all $T^1$, $T_{u,v}^1$ by $\overline{T}^1$ and $\overline{T}_{u,v}^1$, which is the value corresponding to $\epsilon = 0$, and use the lemma belows to make the result more concise as well.

\subsubsection{Concise solution}
	To get a concise solution, we give a further result:
\begin{lem}
	\emph{
	$\forall M\geq 3$, if $\epsilon=0$, which means $T_{u,v}^1 = \overline{T}_{u,v}^1$, $T^1=\overline{T}^1$,  then for $\overline{x}^1 \triangleq \frac{(1-p)^2 q_{10}}{p_1}$, we have 
\begin{align}
	\frac{1}{M\overline{T}^1}\Lambda_{M\overline{T}^1}^*\left(MT((1-p)^2q_{10})\right)
=&\sup_{\lambda \in \mathbf{R}}\left\{ \lambda \overline{x}^1 - \frac{1}{M\overline{T}^1} \Lambda_{M\overline{T}^1}\left(M\overline{T}^1\lambda\right)\right\} \nonumber\\
 \geq &\frac{M-2}{M}\sup_{\lambda\in \mathbf{R}}(\lambda \overline{x}^1 - \log \rho_+)
\end{align}}
\end{lem}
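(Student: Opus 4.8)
The statement splits into an equality and an inequality, and only the second requires work. \emph{The equality} is just the definition of the Fenchel--Legendre transform: writing $N=M\overline{T}^{1}$ one has $\Lambda_{N}^{*}(x)=\sup_{\lambda}\{\lambda x-\Lambda_{N}(\lambda)\}$, and dividing by $N$ together with the identity $MT(1-p)^{2}q_{10}=M\overline{T}^{1}\overline{x}^{1}$ (which holds because $\overline{T}^{1}=p_{1}T$ and $\overline{x}^{1}=(1-p)^{2}q_{10}/p_{1}$) yields exactly the claimed supremum. So the task is to prove
\begin{align}
\sup_{\lambda\in\mathbf{R}}F_{M}(\lambda)\ \ge\ \frac{M-2}{M}\sup_{\lambda\in\mathbf{R}}G(\lambda),\qquad F_{M}(\lambda)\triangleq\lambda\overline{x}^{1}-\tfrac{1}{M\overline{T}^{1}}\Lambda_{M\overline{T}^{1}}(M\overline{T}^{1}\lambda),\quad G(\lambda)\triangleq\lambda\overline{x}^{1}-\log\rho_{+}(\lambda). \nonumber
\end{align}

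Since $\sup_{\lambda}F_{M}(\lambda)\ge F_{M}(\lambda_{0})$ for every $\lambda_{0}$, I would plug in $\lambda_{0}=\lambda^{*}$, the maximizer of the concave function $G$. A short computation gives $G'(0)=\overline{x}^{1}-(1-p)^{2}=(1-p)^{2}\bigl(1-(1-p)^{2}\bigr)(q_{10}+q_{01}-1)/p_{1}$, so by concavity $\lambda^{*}$ has the same sign as $q_{10}+q_{01}-1$; I will present the case $q_{10}+q_{01}\le 1$, so $\lambda^{*}\le 0$, hence $\rho_{+}(\lambda^{*})\le 1$ and $\rho_{-}(\lambda^{*})=p(1-p)(e^{\lambda^{*}}-1)/\rho_{+}(\lambda^{*})\le 0$ (the opposite case $q_{10}+q_{01}>1$, so $\lambda^{*}>0$ and $\rho_{-}\ge 0$, is treated the same way with the slack below coming from the $(0,0)$-block instead). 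Substituting the closed form of the preceding lemma, $\Lambda_{M\overline{T}^{1}}(M\overline{T}^{1}\lambda)=\sum_{(u,v)}\overline{T}^{1}_{u,v}\log g_{u,v}(M,\lambda)$, and using $\sum_{(u,v)}\overline{T}^{1}_{u,v}=\overline{T}^{1}$ and $\overline{T}^{1}_{0,0}=(1-p)^{2}q_{10}\,T=\overline{T}^{1}\overline{x}^{1}$, the inequality $F_{M}(\lambda^{*})\ge\frac{M-2}{M}G(\lambda^{*})$ is equivalent to
\begin{align}
\sum_{(u,v)\in\{0,1\}^{2}}\overline{T}^{1}_{u,v}\,\log\frac{g_{u,v}(M,\lambda^{*})}{\rho_{+}(\lambda^{*})^{M-2}}\ \le\ 2\lambda^{*}\,\overline{T}^{1}_{0,0}. \nonumber
\end{align}

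To establish this target inequality I would use the spectral form $J_{m}(\lambda)=(\rho_{+}^{m}-\rho_{-}^{m})/(\rho_{+}-\rho_{-})$ from the previous lemma. Because $\rho_{-}(\lambda^{*})\le 0$ and $|\rho_{-}|<\rho_{+}$, one gets $J_{m}(\lambda^{*})\le\rho_{+}^{m-1}$ for $m\ge 2$, hence $g_{1,0}(M,\lambda^{*})=g_{0,1}(M,\lambda^{*})=J_{M-1}(\lambda^{*})\le\rho_{+}^{M-2}$, so the $(1,0)$ and $(0,1)$ terms contribute a nonpositive amount — this is the slack that must absorb the positive contributions of $g_{1,1}=J_{M-1}+(1-\alpha)J_{M-2}$ and $g_{0,0}=e^{2\lambda}(J_{M-1}-p(1-e^{-\lambda})J_{M-2})$, the prefactor $e^{2\lambda^{*}}$ of $g_{0,0}$ being exactly what the right-hand side $2\lambda^{*}\overline{T}^{1}_{0,0}$ accounts for. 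The core estimate is thus that the favorable $(1,0),(0,1)$ blocks, weighted by $2\overline{T}^{1}_{1,0}$, dominate the residual $g_{1,1}$- and $g_{0,0}$-corrections; for this I would combine (i) the relations $\rho_{+}+\rho_{-}=\alpha=p+(1-p)e^{\lambda}$ and $\rho_{+}\rho_{-}=p(1-p)(e^{\lambda}-1)$, (ii) the factorization $\overline{T}^{1}_{u,v}=p_{y|y_{0}}(1|u\vee v)\,p_{x}(u)p_{x}(v)\,T$ of the weights, and (iii) the first-order optimality relation $\overline{x}^{1}=\rho_{+}'(\lambda^{*})/\rho_{+}(\lambda^{*})$ characterizing $\lambda^{*}$.

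The hard part is making step (iii) work \emph{uniformly in $M\ge 3$}: the naive per-term bounds $J_{M-1}\le\rho_{+}^{M-2}$, $J_{M-2}\le\rho_{+}^{M-3}$ are too lossy and only deliver $\sup_{\lambda}F_{M}(\lambda)\ge\frac{M-2}{M}\sup_{\lambda}G(\lambda)-O(1/M)$, so one must retain the sub-leading $(\rho_{-}/\rho_{+})^{M}$ terms and exploit the cancellation among the three weighted logarithms. A cleaner route I would pursue in parallel returns to the transfer-matrix description of the chain $[h_{n}]$: bound $g_{u,v}(M,\lambda)$ by a boundary-to-boundary product $\langle\ell_{u,v},\mathbf{\Pi}_{\lambda}^{\,M-2}\,r_{u,v}\rangle$, with $\mathbf{\Pi}_{\lambda}$ the $\lambda$-tilted copy of $\mathbf{\Pi}$ (Perron eigenvalue $\rho_{+}(\lambda)$), control $\mathbf{\Pi}_{\lambda}^{\,M-2}$ by $\rho_{+}(\lambda)^{M-2}$ via the Perron eigenvector, and check that in the $\overline{T}^{1}_{u,v}$-weighted product $\prod_{(u,v)}g_{u,v}(M,\lambda)^{\overline{T}^{1}_{u,v}}$ the sole surviving boundary factor is the $e^{2\lambda}$ on the $(0,0)$-block. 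Finally I would dispatch the degenerate cases $\lambda^{*}=0$ (i.e.\ $q_{10}+q_{01}=1$, by continuity from the preceding cases) and $M=3$ (where $g_{1,0}=\alpha$, $g_{1,1}=1$, $g_{0,0}=e^{2\lambda}(\alpha-p+pe^{-\lambda})$) by direct inspection.
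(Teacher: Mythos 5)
Your setup is correct and matches the paper's: the equality is indeed just the Fenchel--Legendre definition together with $MT(1-p)^2q_{10}=M\overline{T}^1\overline{x}^1$, the right move is to lower-bound the supremum by evaluating at the maximizer $\lambda^*$ of $\lambda\overline{x}^1-\log\rho_+$, the sign analysis of $\lambda^*$ via $q_{10}+q_{01}-1$ agrees with the paper's computation of $F'(0),G'(0)$, and your reduction to $\sum_{(u,v)}\overline{T}^1_{u,v}\log\bigl(g_{u,v}(M,\lambda^*)/\rho_+^{M-2}\bigr)\le 2\lambda^*\overline{T}^1_{0,0}$ is algebraically right (using $\overline{T}^1\overline{x}^1=\overline{T}^1_{0,0}$). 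The bound $J_m\le\rho_+^{m-1}$ for $\rho_-\le 0$ is also correct, since $J_m-\rho_+^{m-1}=\rho_-J_{m-1}$.

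However, there is a genuine gap: the core estimate is never established. You say so yourself --- the per-term bounds are ``too lossy and only deliver $\sup F_M\ge\frac{M-2}{M}\sup G-O(1/M)$,'' and you then sketch two possible repairs (retaining the $(\rho_-/\rho_+)^M$ corrections, or a transfer-matrix boundary-factor argument) without executing either. That estimate is precisely where the content of the lemma lies, so the proposal as written does not prove the statement. The paper closes this step differently: instead of bounding each $\log g_{u,v}$ separately, it applies Jensen's inequality (concavity of $\log$) to the $\overline{T}^1_{u,v}$-weighted sum, obtaining
\begin{align}
-2\lambda\overline{x}^1+\sum_{u,v}\frac{\overline{T}^1_{u,v}\log g_{u,v}}{\overline{T}^1}\le\log\Bigl((1-\overline{x}^1)\frac{\sum_{(u,v)\ne(0,0)}\overline{T}^1_{u,v}g_{u,v}}{\sum_{(u,v)\ne(0,0)}\overline{T}^1_{u,v}}+\overline{x}^1\frac{g_{0,0}}{e^{2\lambda}}\Bigr),\nonumber
\end{align}
which aggregates the four blocks into a single logarithm of a convex combination; combined with the explicit characterization of $\lambda\in[\lambda^*,0]$ through $G'(\lambda)\le 0$ (i.e.\ $e^{\lambda}\ge\overline{x}^1\rho_+(2\rho_+-p)/\bigl((1-p)((1+\overline{x}^1)\rho_+-p)\bigr)$), this yields $F(\lambda)\ge G(\lambda)$ on the whole interval, uniformly in $M\ge 3$. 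If you want to complete your argument, this Jensen aggregation is the missing ingredient; the individual-term route you started down will not close the $O(1/M)$ deficit.
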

\begin{proof}
	Define
\begin{align}
	F(\lambda) \triangleq  &\lambda \overline{x}^1 - \frac{1}{M\overline{T}^1} \Lambda_{M\overline{T}^1}\left(M\overline{T}^1\lambda\right)\nonumber\\
	G(\lambda) \triangleq&\frac{M-2}{M}(\lambda \overline{x}^1-\log \rho_{+})\nonumber
\end{align}
Due to Lemma 2.2.5 in \cite{Dembo2010}, $F(\lambda)$ and $G(\lambda)$ are both convex. And we can find by calculation if $q_{10}+q_{01} <1$, $F\rq{}(0), G\rq{}(0) < 0$;  if $q_{10}+q_{01} > 1$, $F\rq{}(0), G\rq{}(0) > 0$. Which means the maximum of $F(\lambda), G(\lambda)$ will locate at $\lambda <0$, for $q_{10}+q_{01} <1$, and vice versa. Now we consider $q_{10}+q_{01} <1$, the other case is the same. We will show if $\lambda^*=\arg\max_{\lambda} G(\lambda)$, for $\lambda^*\leq \lambda \leq 0$, there is always $F(\lambda) \geq G(\lambda)$, which implies the lemma.

 The convexity of $G(\lambda)$ means $G\rq{}(\lambda^*) =0$, by calculating $G\rq{}(\lambda)$, we have:
\begin{align}
	\lambda^* \leq \lambda \leq 0 \iff &G\rq{}(\lambda) \leq 0, ~\lambda \leq 0\nonumber\\
	\iff&e^{\lambda} \geq \frac{\overline{x}^1\rho_{+}(2\rho_+ - p)}{(1-p)((1+\overline{x}^1)\rho_+ - p)}, \lambda \leq 0 \label{calcon1}
\end{align}
	Also since the convexity of $\log (\cdot)$, we have:
\begin{align}
-2 \lambda \overline{x}^1 + \sum_{u,v}\frac{\overline{T}_{u,v}^1\log (g_{u,v})}{\overline{T}^1} 
\leq &\log \left((1-\overline{x}^1)\frac{\sum_{(u,v)\neq (0,0)}\overline{T}_{u,v}^1 g_{u,v}}{\sum_{(u,v)\neq (0,0)}\overline{T}_{u,v}^1}+\overline{x}^1\frac{g_{0,0}}{e^{2\lambda}}\right)\label{calcon2}
\end{align}
By \eqref{calcon1} and \eqref{calcon2}, we can derive by calculation that $F(\lambda) \geq G(\lambda), \forall M \geq 3$, which completes the proof for $q_{01}+q_{10}<1$. It is the similar for $q_{01}+q_{10}>1$.
\end{proof}

	\subsubsection{Continuity to drop $\epsilon$}
	By continuity,  assume
$$\sup_{[\channeloutput, \mathbf{x}_1, \mathbf{x}_2] \in \mathcal{A}_{\epsilon}^T}\left|\Lambda_{MT^1}^*\left(MT ((1-p)^2q_{10}-\epsilon)\right)-\Lambda_{M\overline{T}^1}^*\left(MT((1-p)^2q_{10})\right)\right|=\frac{\xi(\epsilon)}{4} >0 $$
we can see $\xi(\epsilon) \xrightarrow{\epsilon \to 0} 0$. Note that $\xi(\epsilon)$ is also dependent with $p, q_{10}, q_{01}$, but we don\rq{}t write it explicitly for simplification. Thus, we can derive from \eqref{originalmu} and Lemma 3 that when $q_{01}+q_{10}\neq 1$, 
\begin{align}
	\mu_M^1 \leq e^{ -(M-2)p_1T\left(\sup_{\lambda \in \mathbf{R}}\left(\lambda \overline{x}^1-\log \rho_{+}\right) -  \xi(\epsilon)/4\right)} \label{T1}
\end{align}

\subsection{In $\timeset^0$}

		For $\mu^0_M \triangleq \text{Pr}\left([\mathbf{x}^0_{w}, \mathbf{x}^0_{\overline{w+1}}] \in \mathcal{E}^{T^0}_{0,\epsilon}, \forall w|[\channeloutput, \mathbf{x}_1, \mathbf{x}_2]\right)$, since the symmetry, we can solve the problem directly from $\timeset^1$ case. Consider another noisy environment where $q_{10}\rq{} = 1-q_{10}$, and $q_{01}\rq{} = 1-q_{01}$. Then easily we can see $\forall w \in \{0,1\}, p_w\rq{} = 1-p_w$. For any given $\codebook$ satisfying $[\channeloutput, \mathbf{x}_1, \mathbf{x}_2] \in \mathcal{A}^T_{\epsilon}$ and $\forall w, [\mathbf{x}^1_{w}, \mathbf{x}^1_{\overline{w+1}}] \in \mathcal{E}_{e}^{\left(T^1\right)\rq{}}$, it is equivalent to that $\forall w, [\mathbf{x}^0_{w}, \mathbf{x}^0_{\overline{w+1}}] \in \mathcal{E}^{T^1}_{1,\epsilon}$. Thus
by the definition of $\mu_M^w$ in Eq. \eqref{maxmu}, we have
\begin{align}
\max_{[\channeloutput, \mathbf{x}_1, \mathbf{x}_2] \in \mathcal{A}^T_{\epsilon}}\left(\mu^1_M\right)\rq{}
=\max_{[\channeloutput, \mathbf{x}_1, \mathbf{x}_2] \in \mathcal{A}^T_{\epsilon}}\mu^0_M\nonumber
\end{align}
By the upperbound derived for $\mu_M^1$ in Eq. \eqref{T1}, we have:
\begin{align}
	\mu_M^0 \leq e^{ -(M-2)p_0T\left(\sup_{\lambda \in \mathbf{R}}\left(\lambda \overline{x}^0-\log \rho_{+}\right) -  \xi(\epsilon)/4\right)} \label{T0}
\end{align} 
where $\overline{x}^0 \triangleq \frac{(1-p)^2 (1-q_{10})}{p_0}$.

	\subsection{Complete the proof}
	From Eq. \eqref{T0} and Eq. \eqref{T1}, we can upperbound $\tilde{p}_{M|[\channeloutput, \mathbf{x}_1, \mathbf{x}_2]}$ in Eq. \eqref{maxmu}:
\begin{align}
	\max \tilde{p}_{M|[\channeloutput, \mathbf{x}_1, \mathbf{x}_2]} \leq e^{ -(M-2)T\left(p_1 \varphi_1 + p_0 \varphi_3-  \xi(\epsilon)/2\right)}\nonumber
\end{align}
Thus, if $\frac{\log N}{T} \leq p_1 \varphi_1 + p_0 \varphi_0 - \xi(\epsilon)$, we have
\begin{align}
	P_e^{(N)} 
\leq &\sum_{M =3, 5, \ldots} e^{(M-2)T\left( \log N /T - (p_1\varphi_1+ p_0 \varphi_0) - \xi(\epsilon)/2\right)}+ \text{Pr}([\channeloutput, \mathbf{x}_1, \mathbf{x}_2] \notin \mathcal{A}_{\epsilon}^T)\nonumber\\
\leq  &\sum_{M =3, 5, \ldots} e^{-(M-2)\frac{\xi(\epsilon)}{2}T}+ \text{Pr}([\channeloutput, \mathbf{x}_1, \mathbf{x}_2] \notin \mathcal{A}_{\epsilon}^T) \xrightarrow{T \to \infty} 0\nonumber
\end{align}
which completes the proof.
\end{IEEEproof}
\section{Proof of Theorem 2}\label{app2}
\begin{IEEEproof}
	Denote $A_3$ the event that $\hyperone$ contains 1-odd cycle of length 3, for any $\epsilon$ in the necessity of Eq. \eqref{muij}(we use $\epsilon$ instead of $\hat{\epsilon}$ for simplifying the notations), we have by considering the strong typical set:
\begin{align}
	P_3 = &\text{Pr}((A_3\vee \realactiveset\notin \hyperone),  [\channeloutput,\mathbf{x}_1,\mathbf{x}_2]\in \mathcal{A}_{\epsilon}^T) + \text{Pr}([\channeloutput,\mathbf{x}_1,\mathbf{x}_2]\notin \mathcal{A}_{\epsilon}^T)\nonumber\\
\geq &\text{Pr}(A_3,  [\channeloutput,\mathbf{x}_1,\mathbf{x}_2]\in \mathcal{A}_{\epsilon}^T)\label{typical3}
\end{align} 
It can be seen that Eq. \eqref{typical3} is non-decreasing with $\epsilon$, thus equivalently, we just need to show if for any $\frac{\log N}{T} \geq \necessarybound + \xi(\epsilon)$, where $\xi(\epsilon)>0$ is a function satisfying $\xi(\epsilon) \xrightarrow{\epsilon \to 0} 0$, we have $\text{Pr}(A_3,  [\channeloutput,\mathbf{x}_1,\mathbf{x}_2]\in \mathcal{A}_{\epsilon}^T) \xrightarrow{T \to \infty} 1$. Further, since
\begin{align}
	\text{Pr}(A_3,  [\channeloutput,\mathbf{x}_1,\mathbf{x}_2]\in \mathcal{A}_{\epsilon}^T)\geq \text{Pr}([\channeloutput,\mathbf{x}_1,\mathbf{x}_2]\in \mathcal{A}_{\epsilon}^T) \min_{[\channeloutput,\mathbf{x}_1,\mathbf{x}_2]\in \mathcal{A}_{\epsilon}^T}\text{Pr}(A_3|[\channeloutput,\mathbf{x}_1,\mathbf{x}_2]) \label{typical4}
\end{align}
and for any $\epsilon > 0$, $\text{Pr}([\channeloutput,\mathbf{x}_1,\mathbf{x}_2]\in \mathcal{A}_{\epsilon}^T) \xrightarrow{T \to \infty} 1$, so $\min_{[\channeloutput,\mathbf{x}_1,\mathbf{x}_2]\in \mathcal{A}_{\epsilon}^T}\text{Pr}(A_3|[\channeloutput,\mathbf{x}_1,\mathbf{x}_2])$ is considered in the following part, and $[\channeloutput,\mathbf{x}_1,\mathbf{x}_2]\in \mathcal{A}_{\epsilon}^T$ is assumed to be held. Since
\begin{align}
\text{Pr}(A_3|[\channeloutput,\mathbf{x}_1,\mathbf{x}_2])\triangleq &\text{Pr}\left(\codebook:\bigcup_{i=3}^N \left(\hyperone ~\text{contains cycle}~ (1,2,i)\right)|[\channeloutput,\mathbf{x}_1,\mathbf{x}_2]\right)\nonumber\\
\overset{(a)}{=}&1-\left(1-\text{Pr}\left(\mathbf{x}_3: \hyperone ~\text{contains cycle}~ (1,2,3)|[\channeloutput,\mathbf{x}_1,\mathbf{x}_2]\right)\right)^{N-2}\nonumber
\end{align}
Eq. $(a)$ is because codewords of users are i.i.d. generated, So WLOG we just consider the cycle $(1,2,3)$ and the codeword $\mathbf{x}_3$. By the sub-optimal jointly decoding method and since $\channeloutput \in \mathcal{E}^{T}_{y,\epsilon}$ is already held, it means $\forall w \in\{0,1\}, [\mathbf{x}^w_1,\mathbf{x}^w_2], [\mathbf{x}^w_2,\mathbf{x}^w_3], [\mathbf{x}^w_1,\mathbf{x}^w_3] \in \mathcal{E}_{\epsilon}^{T^w}$. Because $[\channeloutput,\mathbf{x}_1,\mathbf{x}_2]\in \mathcal{A}_{\epsilon}^T$ means $[\mathbf{x}^w_1,\mathbf{x}^w_2] \in \mathcal{E}_{\epsilon}^{T^w}$, only $[\mathbf{x}^w_2,\mathbf{x}^w_3]$ and $ [\mathbf{x}^w_1,\mathbf{x}^w_3]$ should be considered.
 
	Time is separated to 8 parts $\timeset_{u,v}^w$ by the values of $(y_t, x_{1,t}, x_{2,t})=(w,u,v)$, denote $a^w_{u,v} \triangleq |\{t\in \timeset_{u,v}^w:x_{3,t}=0\}|$ to be the number of slots that $x_{3, t}$ takes value zeros, we have $0\leq a_{u,v}^w \leq T_{u,v}^w$, and  $\forall w \in \{0,1\}$,
\begin{align}
\begin{cases}
	N((0,0)|[\mathbf{x}^w_1,\mathbf{x}^w_3]) = a^w_{0,0}+a^w_{0,1}\\
	N((0,0)|[\mathbf{x}_2^w,\mathbf{x}^w_3]) = a^w_{0,0}+a^w_{1,0}
\end{cases}
\end{align}
Then by the definition of $\mathcal{E}_{\epsilon}^{T^w}$ in Eq. \eqref{newrule1}, we have:
\begin{align}
	\begin{cases}
	\left|\frac{1}{T}(a^w_{0,0}+a^w_{0,1})-p_{y,y_0}(w,0)\right|\leq \frac{\epsilon}{4}\\
	\left|\frac{1}{T}( a^w_{0,0}+a^w_{1,0})-p_{y,y_0}(w,0)\right|\leq \frac{\epsilon}{4}
\end{cases}
\end{align}
Up to a constant factor of $\epsilon$, it is equivalent to:
\begin{align}
	\begin{cases}
	\left|\frac{1}{T}(a^w_{0,0}+a^w_{0,1})-p_{y,y_0}(w,0)\right|\leq \frac{\epsilon}{4}\\
	\frac{\left|a^w_{1,0}-a^w_{0,1}\right|}{T}\leq \frac{\epsilon}{4}
\end{cases}
\end{align}
Denote $\mathbf{b}^w \triangleq \frac{1}{T^w}[a^w_{0,0}, a^w_{1,0},a^w_{0,1}]^\top$, and:
\begin{align}
	\mathcal{B}_{\epsilon}^{T^w} = 
\left\{\begin{array}{ll}
[b^w_{0,0},b^w_{1,0},b^w_{0,1}]^\top\in \mathbb{R}^3:&\\
\left|\frac{T^w}{T}(b^w_{0,0}+b^w_{0,1})-p_{y,y_0}(w,0)\right|\leq \frac{\epsilon}{4};&\\
	\frac{T^w\left|b^w_{1,0}-b^w_{0,1}\right|}{T}\leq \frac{\epsilon}{4};&\\
	0\leq b^w_{u,v} \leq \frac{T_{u,v}^w}{T^w}.
\end{array}\right\}
\end{align}
Thus,
\begin{align}
	\text{Pr}(A_3|[\channeloutput,\mathbf{x}_1,\mathbf{x}_2]) = &1- \left( 1-\prod_{w=1}^2\text{Pr}\left(\mathbf{b}^w\in \mathcal{B}_{\epsilon}^{T^w}|[\channeloutput,\mathbf{x}_1,\mathbf{x}_2]\right)\right)^{N-2}\nonumber\\
\triangleq &1-(1-\mu_3^0 \cdot \mu_3^1)^{N-2} \label{52}
\end{align}
We can still use large deviation technique to calculate $\mu_3^w$. Now the distribution of $a^w_{u,v}$ is a binomial distribution:
\begin{align}
	\forall 0\leq n \leq T_{u,v}^w, ~\text{Pr}(a^w_{u,v}=n) = {T_{u,v}^w \choose n} (1-p)^n p^{T_{u,v}^w-n}
\end{align}
and $a^w_{u,v}$ are independent with each other. So the logarithmic moment generating function of $\mathbf{b}^w$:
\begin{align}
	\Lambda_w(\boldsymbol{\lambda})\triangleq &
\lim_{T^w \to \infty}\frac{1}{T^w}\log E\left[e^{\boldsymbol{\lambda}^\top \mathbf{b}^w T^w}\right]\nonumber\\
=&\lim_{T^w \to \infty}\frac{1}{T^w} \log \prod_{u\wedge v = 0}\sum_{a^w_{u,v}=0}^{T_{u,v}^w}{T_{u,v}^w \choose a^w_{u,v}} (1-p)^{a^w_{u,v}} p^{T_{u,v}^w-a^w_{u,v}}e^{\lambda_{u,v} a^w_{u,v}}\nonumber\\
=&\sum_{u\wedge v = 0} \beta^w_{u,v} \log \left(p+(1-p)e^{\lambda_{u,v}}\right)
\end{align}
where $\beta^w_{u,v} = \lim_{T^w \to \infty}\frac{T^w_{u,v}}{T^w}$, and $\boldsymbol{\lambda} = [\lambda_{1,0}, \lambda_{0,1},\lambda_{0,0}]^\top$. It is easy to see we have the origin belongs to the interior of $\mathcal{D}_{w,\boldsymbol{\lambda}}\triangleq \{\boldsymbol{\lambda} \in \mathbb{R}^3: \Lambda_w(\boldsymbol{\lambda})<\infty\}$, by the G{\"a}rtner-Ellis Theorem 2.3.6 in \cite{Dembo2010}, we have:
\begin{align}
\liminf_{T^w \to \infty}\frac{1}{T^w} \log \text{Pr}\left(\mathbf{b}^w \in \mathcal{B}_{\epsilon}^{T^w}|[\channeloutput,\mathbf{x}_1,\mathbf{x}_2]\right) \geq -\inf_{\mathbf{b}^w\in \left(\mathcal{B}_{\epsilon}^{T^w}\right)^o \cap \mathcal{F}^w} \Lambda_{w,\epsilon}^*(\mathbf{b}^w) \triangleq -\underline{\varphi}_{w,\epsilon}\label{lowerboundl}
\end{align}
where 
\begin{align}
	\Lambda_{w,\epsilon}^*(\mathbf{b}^w) =  \sup_{\boldsymbol{\lambda} \in \mathbb{R}^3} \left[\boldsymbol{\lambda}^\top \mathbf{b}^w-\sum_{u\wedge v = 0} \beta^w_{u,v} \log \left(p+(1-p)e^{\lambda_{u,v}}\right)\right]
\end{align}
$\left(\mathcal{B}_{\epsilon}^{T^w}\right)^o$ is the interior of $\mathcal{B}_{\epsilon}^{T^w}$, and $\mathcal{F}^w$ is the exposed point of $\Lambda_{w,\epsilon}^*(\mathbf{b}^w)$. Further, since for any $[b_{1,0},b_{0,1},b_{0,0}]^{\top} \in \left(\mathcal{B}_{\epsilon}^{T^w}\right)^o$, we can always find some $[\lambda_{1,0},\lambda_{0,1},\lambda_{0,0}]^{\top} \in \mathcal{D}_{w,\lambda}^o$, so that
\begin{align}
	[b_{1,0},b_{0,1},b_{0,0}]^{\top} = &\nabla \Lambda_w([\lambda_{1,0},\lambda_{0,1},\lambda_{0,0}]^{\top})\nonumber\\
=&\left[\beta_{u,v}^w\frac{(1-p)e^{\lambda_{u,v}}}{p+(1-p)e^{\lambda_{u,v}}}\right]^{\top}_{u\wedge v = 0}
\end{align}
by Lemma 2.3.9 in \cite{Dembo2010}, we have $[b_{1,0},b_{0,1},b_{0,0}]^{\top} \in \mathcal{F}^w$, so $\left(\mathcal{B}_{\epsilon}^{T^w}\right)^o \subseteq\mathcal{F}^w$, which means we can calculate the infimum in the lower bound \eqref{lowerboundl} in $\left(\mathcal{B}_{\epsilon}^{T^w}\right)^o$. 

	Because of the continuity, assume the radius of the neighbourhood of $\underline{\varphi}_{w,\epsilon}|_{\epsilon=0}$ to be:
\begin{align}
	\sup_{[\channeloutput,\mathbf{x}_1,\mathbf{x}_2]\in \mathcal{A}_{\epsilon}^T}\left|\underline{\varphi}_{w,\epsilon}-\inf_{\mathbf{b}^w\in \left(\mathcal{B}_w\right)^o} \Lambda^*_w(\mathbf{b}^w)\right| = \xi_1(\epsilon)\label{58}
\end{align}
where 
\begin{align}
	\Lambda_{w}^*(\mathbf{b}^w) =  \sup_{\boldsymbol{\lambda} \in \mathbb{R}^3} \left[\boldsymbol{\lambda}^\top \mathbf{b}^w-\sum_{u\wedge v = 0} \frac{p_{y,x_1,x_2}(w,u,v)}{p_w} \log \left(p+(1-p)e^{\lambda_{u,v}}\right)\right]
\end{align}
\begin{align}
	\mathcal{B}_w =
\left\{\begin{array}{ll}
\mathbf{b}^w =[b^w_{0,0},b^w_{1,0},b^w_{0,1}]^\top\in \mathbb{R}^3:&\\
0 \leq b^w_{0,0}\leq \frac{p_{y,y_0}(w,0)}{p_w}, ~0 \leq b^w_{1,0}, b^w_{0,1}\leq \frac{p_{y,x_1,x_2}(w,1,0)}{p_w}&\\
b^w_{1,0}=b^w_{0,1} = \frac{p_{y,y_0}(w,0)}{p_w} - b^w_{0,0}.
\end{array}\right\}
\end{align}
are $\Lambda_{w,\epsilon}^*(\mathbf{b}^w)\mid_{\epsilon=0}$, $\mathcal{B}_{\epsilon}^{T^w}\mid_{\epsilon=0}$ respectively.

Denote $\hat{D}(p,q_{10},q_{01}) = p_1 \inf_{\mathbf{x}^1\in \left(\mathcal{B}_1\right)^o} \Lambda^*_1(\mathbf{x}^1) + p_0 \inf_{\mathbf{x}^0\in \left(\mathcal{B}_0\right)^o} \Lambda^*_0(\mathbf{x}^0)$, by Eq. \eqref{lowerboundl}, \eqref{58}, asymptotically we have
\begin{align}
\mu_3^0 \mu_3^1
\geq &e^{-T\left(\left(p_1+\frac{\epsilon}{2}\right)\left(\inf_{\mathbf{x}^1\in \mathcal{B}_1} \Lambda_1^*(\mathbf{x}^1)+\xi_1(\epsilon)\right)+\left(p_0+\frac{\epsilon}{2}\right)\left(\inf_{\mathbf{x}^0\in \mathcal{B}_0} \Lambda_0^*(\mathbf{x}^0)+\xi_1(\epsilon)\right)\right)}\nonumber\\
=&\text{exp}\left(-T\left(\hat{D}(p,q_{10},q_{01}) + \xi(\epsilon)/2 \right)\right) \label{93}
\end{align}
where $\xi(\epsilon)>0$ and $\xi(\epsilon) \xrightarrow{\epsilon\to 0} 0$. Which means there is a $c$ so that $\mu_3^0 \mu_3^1 = e^{-cT}$ and $c \leq \hat{D}(p,q_{10},q_{01}) + \xi(\epsilon)/2$. 
If $\frac{\log (N-2)}{T} \geq \hat{D}(p,q_{10},q_{01}) + \xi(\epsilon)$, by Eq. \eqref{52}, we can see 
\begin{align}
	\text{Pr}(A_3|[\channeloutput,\mathbf{x}_1,\mathbf{x}_2]) = &
1-\left[\left(1-e^{-cT}\right)^{e^{cT}}\right]^{e^{T\left(\frac{\log (N-2)}{T}-c\right)}}\nonumber\\
\geq &1-\left[\left(1-e^{-cT}\right)^{e^{cT}}\right]^{e^{\frac{\xi(\epsilon)}{2}T}} \xrightarrow{T \to \infty} 1 \nonumber
\end{align}
which is what we intend to show, except that it is $\hat{D}(p,q_{10},q_{01})$ here. 

	So the last step is to show $\hat{D}(p,q_{10},q_{01}) = D(p,q_{10},q_{01})$ in the theorem, which is equivalent to showing $\varphi_w\rq{} = \inf_{\mathbf{x}^w\in \mathcal{B}_w} \Lambda^*_w(\mathbf{b}^w)$. We just calculate $\varphi_1\rq{}=\inf_{\mathbf{b}^1\in \mathcal{B}_1} \Lambda^*_1(\mathbf{b}^1)$, the other part $\varphi_0\rq{}$ can be easily proved by symmetry. It can be seen if the inf-sup is achieved when $\lambda_{1,0}=\lambda_{0,1}=\lambda_{0,0}/2$, the equation is held. Due to the convexity of $\Lambda^*_w(\mathbf{b}^w)$ with both $\lambda \in \mathbb{R}^3$ and $\mathbf{b}^w \in \mathcal{B}_w$ (by checking their second derivatives), we first let $\forall (u,v), \frac{\partial \Lambda^*_1(\mathbf{b}^1)}{\partial \lambda_{u,v}}=0$ to find the optimal $\boldsymbol{\lambda}^*$ for any given $\mathbf{b}^1=[b_{1,0}^1, b_{0,1}^1,b_{0,0}^1]$,
\begin{align}
	\lambda_{1,0}^* =\lambda_{0,1}^* = &\log \frac{p}{1-p}\frac{a_{1,0}^1}{p_{x_1,x_2|y}(1,0|1)-a_{1,0}^1}\\
	\lambda_{0,0}^* = &\log \frac{p}{1-p}\frac{a_{0,0}^1}{p_{x_1,x_2|y}(0,0|1)-a_{0,0}^1}
\end{align} 
Substitute $\boldsymbol{\lambda}^*$ to $\Lambda^*_1(\mathbf{b}^1)$, and let $\frac{\partial\Lambda^*_1(\mathbf{b}^1)}{\partial a_{0,0}^1} =0$, by using the requirement $a_{1,0}^1+a_{0,0}^1 = p_{x_1,x_2|y}(0,0|1)$, we can obtain the equation $2 \lambda_{1,0}^* = \lambda_{0,0}^*$.
Thus, when the inf-sup is achieved, we have $\lambda_1=\lambda_2=\lambda_3/2$, which completes the proof.
\end{IEEEproof}
\bibliographystyle{IEEEtran}
\bibliography{toword}

\begin{thebibliography}{10}
\providecommand{\url}[1]{#1}
\csname url@samestyle\endcsname
\providecommand{\newblock}{\relax}
\providecommand{\bibinfo}[2]{#2}
\providecommand{\BIBentrySTDinterwordspacing}{\spaceskip=0pt\relax}
\providecommand{\BIBentryALTinterwordstretchfactor}{4}
\providecommand{\BIBentryALTinterwordspacing}{\spaceskip=\fontdimen2\font plus
\BIBentryALTinterwordstretchfactor\fontdimen3\font minus
  \fontdimen4\font\relax}
\providecommand{\BIBforeignlanguage}[2]{{%
\expandafter\ifx\csname l@#1\endcsname\relax
\typeout{** WARNING: IEEEtran.bst: No hyphenation pattern has been}%
\typeout{** loaded for the language `#1'. Using the pattern for}%
\typeout{** the default language instead.}%
\else
\language=\csname l@#1\endcsname
\fi
#2}}
\providecommand{\BIBdecl}{\relax}
\BIBdecl

\bibitem{wsh2}
S.~Wu, S.~Wei, Y.~Wang, R.~Vaidy, J.~Yuan, ``Transmission of partitioning
  information over non-adaptive multi-access boolean channel,''  in \emph{Proceedings of the 48th Annual Conference on Information Sciences and Systems, CISS}, Princeton, Mar. 2014.

\bibitem{wu2014partition}
S.~Wu, S.~Wei, Y.~Wang, R.~Vaidy, J.~Yuan, ``Partition information and its transmission over boolean multi-access
  channels,'' \emph{arXiv preprint arXiv:1404.0425}, submitted to \emph{IEEE Trans. on Infor. Theory}, Mar. 2014.

\bibitem{6157065}
G.~K. Atia and V.~Saligrama, ``Boolean compressed sensing and noisy group
  testing,'' \emph{Information Theory, IEEE Transactions on}, vol.~58, no.~3,
  pp. 1880--1901, 2012.

\bibitem{1057022}
R.~Gallager, ``A perspective on multiaccess channels,'' \emph{Information
  Theory, IEEE Transactions on}, vol.~31, no.~2, pp. 124--142, 1985.

\bibitem{ding2000combinatorial}
D.~Ding-Zhu and F.~K. Hwang, ``Combinatorial group testing and its
  applications,'' 2000.

\bibitem{malyutov2013search}
M.~Malyutov, ``Search for sparse active inputs: a review,'' in
  \emph{Information Theory, Combinatorics, and Search Theory}.\hskip 1em plus
  0.5em minus 0.4em\relax Springer, 2013, pp. 609--647.

\bibitem{1056551}
B.~Hajek, ``Information of partitions with applications to random access
  communications,'' \emph{Information Theory, IEEE Transactions on}, vol.~28,
  no.~5, pp. 691--701, 1982.

\bibitem{1056332}
N.~Pippenger, ``Bounds on the performance of protocols for a multiple-access
  broadcast channel,'' \emph{Information Theory, IEEE Transactions on},
  vol.~27, no.~2, pp. 145--151, 1981.

\bibitem{hajek1987conjectured}
B.~Hajek, ``A conjectured generalized permanent inequality and a multiaccess
  problem,'' in \emph{Open Problems in Communication and Computation}.\hskip
  1em plus 0.5em minus 0.4em\relax Springer, 1987, pp. 127--129.

\bibitem{korner1988separating}
J.~K{\"o}rner and G.~Simonyi, ``Separating partition systems and locally
  different sequences,'' \emph{SIAM journal on discrete mathematics}, vol.~1,
  no.~3, pp. 355--359, 1988.

\bibitem{2639}
J.~Korner and K.~Marton, ``Random access communication and graph entropy,''
  \emph{Information Theory, IEEE Transactions on}, vol.~34, no.~2, pp.
  312--314, 1988.

\bibitem{arikan1994upper}
E.~Arikan, ``An upper bound on the zero-error list-coding capacity,''
  \emph{IEEE Transactions on Information Theory}, vol.~40, no.~4, pp.
  1237--1240, 1994.

\bibitem{720537}
J.~Korner and A.~Orlitsky, ``Zero-error information theory,'' \emph{Information
  Theory, IEEE Transactions on}, vol.~44, no.~6, pp. 2207--2229, Oct 1998.

\bibitem{gyorfilectures}
L.~Gy{\"o}rfi, S.~Gyori, and B.~Laczay, ``Lectures on multiple access
  channels.''

\bibitem{sandor2008}
S.~Gy˝ori, ``Coding for a multiple access or channel: A survey,''
  \emph{Discrete Applied Mathematics}, vol. 156, pp. 1407--1430, 2008.

\bibitem{kautz1964nonrandom}
W.~Kautz and R.~Singleton, ``Nonrandom binary superimposed codes,''
  \emph{Information Theory, IEEE Transactions on}, vol.~10, no.~4, pp.
  363--377, 1964.

\bibitem{dyachkov1983survey}
A.~Dyachkov and V.~Rykov, ``A survey of superimposed code theory,''
  \emph{Problems of Control and Information Theory}, vol.~12, no.~4, pp. 1--13,
  1983.

\bibitem{DeBonis2003223}
A.~De~Bonis and U.~Vaccaro, ``Constructions of generalized superimposed codes
  with applications to group testing and conflict resolution in multiple access
  channels,'' \emph{Theoretical Computer Science}, vol. 306, no.~1, pp.
  223--243, 2003.

\bibitem{chen2007exploring}
H.-B. Chen, F.~K. Hwang \emph{et~al.}, ``Exploring the missing link among
  d-separable,-separable and d-disjunct matrices,'' \emph{Discrete applied
  mathematics}, vol. 155, no.~5, pp. 662--664, 2007.

\bibitem{Kowalski:2005:SPR:1073814.1073843}
D.~R. Kowalski, ``On selection problem in radio networks,'' in
  \emph{Proceedings of the twenty-fourth annual ACM symposium on Principles of
  distributed computing}.\hskip 1em plus 0.5em minus 0.4em\relax ACM, 2005, pp.
  158--166.

\bibitem{Clementi:2001:SFS:365411.365756}
A.~E. Clementi, A.~Monti, and R.~Silvestri, ``Selective families, superimposed
  codes, and broadcasting on unknown radio networks,'' in \emph{Proceedings of
  the twelfth annual ACM-SIAM symposium on Discrete algorithms}.\hskip 1em plus
  0.5em minus 0.4em\relax Society for Industrial and Applied Mathematics, 2001,
  pp. 709--718.

\bibitem{capetanakis1979generalized}
J.~Capetanakis, ``Generalized tdma: The multi-accessing tree protocol,''
  \emph{Communications, IEEE Transactions on}, vol.~27, no.~10, pp. 1476--1484,
  1979.

\bibitem{1057020}
J.~Komlos and A.~G. Greenberg, ``An asymptotically fast nonadaptive algorithm
  for conflict resolution in multiple-access channels,'' \emph{Information
  Theory, IEEE Transactions on}, vol.~31, no.~2, pp. 302--306, 1985.

\bibitem{sebHo1985two}
A.~Seb{\H{o}}, ``On two random search problems,'' \emph{Journal of Statistical
  Planning and Inference}, vol.~11, no.~1, pp. 23--31, 1985.

\bibitem{6120391}
C.~L. Chan, P.~H. Che, S.~Jaggi, and V.~Saligrama, ``Non-adaptive probabilistic
  group testing with noisy measurements: Near-optimal bounds with efficient
  algorithms,'' in \emph{Communication, Control, and Computing (Allerton), 2011
  49th Annual Allerton Conference on}, 2011, Sept., pp. 1832--1839.

\bibitem{5555301}
M.~Malyutov, ``Recovery of sparse active inputs in general systems: A review,''
  in \emph{Computational Technologies in Electrical and Electronics Engineering
  (SIBIRCON), 2010 IEEE Region 8 International Conference on}, 2010, July, pp.
  15--22.

\bibitem{xavier1998introduction}
C.~Xavier and S.~S. Iyengar, \emph{Introduction to parallel algorithms}.\hskip
  1em plus 0.5em minus 0.4em\relax Wiley. com, 1998.

\bibitem{vaidyanathan2003dynamic}
R.~Vaidyanathan and J.~L. Trahan, \emph{Dynamic reconfiguration: architectures
  and algorithms}.\hskip 1em plus 0.5em minus 0.4em\relax Springer, 2003.

\bibitem{hromkovic2005dissemination}
J.~Hromkovic, \emph{Dissemination of information in communication networks:
  broadcasting, gossiping, leader election, and fault-tolerance}.\hskip 1em
  plus 0.5em minus 0.4em\relax Springer, 2005.

\bibitem{wsh3}
S.~Wu, S.~Wei, Y.~Wang, R.~Vaidyanathan and J.~Yuan, ``Achievable Partition Information Rate over Noisy Multi-Access Boolean Channel'', to be present in IEEE International Symposium on Information Theory, Jun. 2014.

\bibitem{csiszar2011information}
I.~Csiszar and J.~K{\"o}rner, \emph{Information theory: coding theorems for
  discrete memoryless systems}.\hskip 1em plus 0.5em minus 0.4em\relax
  Cambridge University Press, 2011.

\bibitem{Dembo2010}
A.~Dembo \emph{et~al.}, \emph{Large deviations techniques and
  applications}.\hskip 1em plus 0.5em minus 0.4em\relax Springer, 2010,
  vol.~38.

\end{thebibliography}

\end{document}